\newcolumntype{Y}{>{\centering\arraybackslash}X}
\pgfplotsset{compat=1.12}
\newtheoremstyle{myplain}
  {9pt}
  {9pt}
  {\itshape}
  {\parindent}
  {\scshape}
  {:}
  {.5em}
  {}
\newtheoremstyle{myremark}
  {9pt}
  {9pt}
  {}
  {\parindent}
  {\scshape}
  {:}
  {.5em}
  {}
\theoremstyle{myplain}
\newtheorem{assumption}{Assumption}[section]
\newtheorem{corollary}{Corollary}[section]
\newtheorem{definition}{Definition}[section]
\newtheorem{theorem}{Theorem}[section]
\theoremstyle{myremark}
\newcommand{\B}{\mathsf{B}}
\newcommand{\R}{\mathbb{R}}
\newcommand{\cD}{\mathcal{D}} 
\newcommand{\cB}{\mathbb{B}}
\newcommand{\cI}{\mathcal{I}}
\newcommand{\cK}{\mathcal{K}}  
\newcommand{\cL}{\mathcal{L}}
\newcommand{\supp}{\operatorname{supp}}
\newcommand{\cl}{\texttt{I}}
\newcommand{\cm}{\texttt{II}}
\newcommand{\dor}{\texttt{d}}
\newcommand{\I}{\{\cl,\cm\}}
\newcommand{\Epar}{\nu}
\newcommand{\Ypar}{\omega}
\newcommand{\EEcut}{\mathcal{V}}
\newcommand{\YYcut}{\mathcal{W}}
\newcommand{\covx}{\mathbf{x}}
\newcommand{\arc}{\phi}
\newcommand*{\email}[1]{\url{#1}}
\numberwithin{equation}{section}
\numberwithin{figure}{section}
\numberwithin{table}{section}
\renewcommand{\cite}{\citet}
\newcommand{\cmap}{\mathcal{Z}}
\newcommand{\one}{\mathbf{1}}
\newcommand{\interior}[1]{{\kern0pt#1}^{\mathrm{o}}}
\newcommand{\Covx}{\mathbf{X}}
\newcommand{\cOE}{\mathcal{O}_{1}}
\newcommand{\cOY}{\mathcal{O}_{0}}
\newcommand{\cSE}{{h}_{1}}
\newcommand{\cSY}{{h}_{0}}
\definecolor{cornellred}{RGB}{179,27,27} 
\begin{document}
	\setstretch{1.2}
\begin{titlepage}
\title{Risk Preference Types, \\
Limited Consideration, and Welfare\thanks{We thank the editor, Ivan Canay, two anonymous reviewers, Matias Cattaneo, Cristina Gualdani, Elisabeth Honka, Xinwei Ma, Yusufcan Masatlioglu, Julie Mortimer, Deborah Doukas, Roberta Olivieri, and conference participants at FUR22 and at the JBES session at the ESWM23 for helpful comments. Financial support from NSF grants SES-1824448 and SES-2149374 is gratefully acknowledged.
The authors report there are no competing interests to declare.}}
\author{
Levon Barseghyan \\ \small{Department of Economics} \\ \small{Cornell University} \\ \small{\email{lb247@cornell.edu}}
\and
Francesca Molinari \\ \small{Department of Economics} \\ \small{Cornell University} \\ \small{\email{fm72@cornell.edu}}
}

\date{\normalsize{July 10, 2023}}
\maketitle
\begin{abstract}
\noindent
We provide sufficient conditions for semi-nonparametric point identification of a mixture model of decision making under risk, when agents make choices in multiple lines of insurance coverage (\emph{contexts}) by purchasing a \emph{bundle}.
As a first departure from the related literature, the model allows for two preference types.
In the first one, agents behave according to standard expected utility theory with CARA Bernoulli utility function, with an agent-specific coefficient of absolute risk aversion whose distribution is left completely unspecified.
In the other, agents behave according to the dual theory of choice under risk \citep{Yaari1987} combined with a one-parameter family distortion function, where the parameter is agent-specific and is drawn from a distribution that is left completely unspecified.
Within each preference type, the model allows for unobserved heterogeneity in consideration sets, where the latter form at the bundle level -- a second departure from the related literature. 
Our point identification result rests on observing sufficient variation in covariates across contexts, without requiring any independent variation across alternatives within a single context. 
We estimate the model on data on households' deductible choices in two lines of property insurance, and use the results to assess the welfare implications of a hypothetical market intervention where the two lines of insurance are combined into a single one.
We study the role of limited consideration in mediating the welfare effects of such intervention.
\end{abstract}
\setcounter{page}{0}
\thispagestyle{empty}
\end{titlepage}

\section{Introduction}
This paper is concerned with providing sufficient conditions for semi-nonparametric point identification of risk preferences from observation of agents' choices in property insurance markets, and with assessing the welfare impact of policy interventions in these markets. 
Property insurance includes a collection of lines of coverage (e.g., for automobiles: collision, comprehensive, liability, etc.), and we refer to each of them as a \emph{context}.
Within each context, a finite set of alternatives is offered for purchase.
Researchers frequently observe agents choosing (at the same time) one alternative in each context, hence choosing a \emph{bundle}.
We assume that agents choose bundles based on preferences that are stable across contexts (i.e., a single agent-specific parameterization of the model governs that agent's choices in each context).\footnote{This assumption is sometimes viewed as an aspect of rationality \citep[e.g.,][]{Kahneman2003}, and is credible in our empirical study of demand in very similar contexts (collision and comprehensive deductible insurance).} 
Our model allows for unobserved heterogeneity in preference types, with some agents behaving according to expected utility theory with CARA Bernoulli utility function (EU types), and others behaving according to the dual theory of choice under risk \citep{Yaari1987} combined with a one-parameter family distortion function (DT types).
The coefficient of risk aversion of the EU types, and the parameter of the distortion function of the DT types, are random coefficients with unknown distribution functions that are left completely unspecified.
The model also allows for unobserved heterogeneity in the bundles that agents consider before making a choice (their \emph{consideration set}).
In particular, whether an alternative offered in one context is considered can depend in unrestricted ways on whether another alternative offered in a distinct context is also considered.

Such rich unobserved heterogeneity makes identification analysis challenging.
The multiple preference types, random coefficients within type, and agent-specific consideration sets, contribute three layers to a mixtures problem that we need to disentangle.
Moreover, %as we explain in Section \ref{subsec:optimal_choice}, 
because we allow the consideration sets to form at the bundle level, even within a single preference type the choice problem does not inherit the standard \emph{single crossing property} of \citet{mirrlees1971exploration} and \citet{spence1974market}, central to important studies of decision making under risk \citep[e.g.,][]{apesteguia2017single,chiappori2019aggregate}, that \citet{BaMoTh21} show plays a key role in allowing for semi-nonparametric point identification of single preference type models.
Our main methodological contribution amounts to showing how to resolve each of these challenges.
In doing so, we also confront the fact that due to the structure of insurance markets and of data resulting from a single insurance company, while the covariates $\covx$ characterizing products in each context do exhibit independent variation across contexts, they do not exhibit independent variation across alternatives within a context.\footnote{Within a single insurance company, typically in a given context if an agent faces a larger price than another agent for one alternative, the first agent faces a (proportionally) larger price for all other alternatives.}

One may wonder whether some aspects of unobserved heterogeneity that are present in our model could be dispensed with, thereby simplifying the identification problem.
We argue that this is not the case, both in our empirical application and more broadly.
A large literature in experimental economics documents that while some people exhibit behavior consistent with standard EU theory, others exhibit behavior that systematically deviates from it \citep[e.g.,][]{Starmer2000}.
And it  reports substantial heterogeneity in risk preferences within type; see, e.g., \cite{Choi2007} and references therein. 
Moreover, people routinely make or stick to sub-optimal choices \citep{Handel2013,Bhargava2015,BMT16}, or make choices across contexts that imply incompatible levels of risk aversion \citep{Barseghyan2011,Einav2012}. 
The traditional additive error random utility model (Luce-McFadden model), or a  ``trembling hand" alternative \citep[reviewed in][]{Wilcox08} that is sometimes used to study insurance demand, often do not remedy the problem, as the model implied choice probabilities can be incompatible with their empirical counterpart. 
These incompatibilities are not specific to a particular utility model, but to an entire class of models that satisfy properties that are typically viewed as desirable.\footnote{See \cite{BaMoTh21} for a formal discussion and Section \ref{subsection:why_unobs_C} below for further details.} 
On the other hand, models of decision making under risk with limited consideration can rationalize agents' choices.

We illustrate the relevance of the rich unobserved heterogeneity that we allow for, by estimating risk preferences from data on household's choices in two contexts, auto collision and auto comprehensive.
While currently U.S. property insurance companies offer these two lines of coverage as two separate products, we investigate the implications of offering a combined auto insurance product at a price that equals the sum of the prices for the two separate coverages.
Such pricing arises if firms operate under perfect competition or if they use a constant markup rule.
This counterfactual exercise is of substantive interest as combined lines of coverage already exist elsewhere (e.g., in Israel; see \cite{Cohen2007}) and even in the U.S. auto insurance industry.
For example, property damage and bodily injury coverage can be offered both as separate lines of coverage, as well as combined in the form of single limit liability coverage. 
The exercise has the virtue of illustrating the potentially different predictions of the EU model and of the DT model, as we explain in Section \ref{sec:welfare_results}, and the extent to which these predictions interact with whether consideration increases or decreases after the intervention.
Moreover, the exercise informs the debate on the need to simplify insurance choice, and it clarifies how limited consideration interacts with the behavioral responses associated with this type of market intervention.

The rest of the paper is organized as follows.
Section \ref{sec:general:model} lays out the model, using our application as motivating example.
Section \ref{sec:identification} presents our sufficient conditions for its semi-nonparametric point identification.
Section \ref{sec:empirical} describes our empirical model and the data.
Section \ref{sec:results_estimation} reports the results of our estimation exercise.
Section \ref{sec:welfare_results} reports the results of the welfare exercise.
Section \ref{sec:discussion} concludes by contextualizing our work in the broader literature.

\section{Discrete Choice Under Risk in Multiple Contexts}\label{sec:general:model}
Our starting point is the random utility model in \citet{McFadden1974}, applied to study choices over risky alternatives with monetary outcomes.
We further adapt the model to analyze the behavior of agents who make choices under risk in multiple distinct contexts.

\subsection{Lotteries as objects of choice in property insurance}
\label{subsection:lotteries}
We use our empirical application as motivating example for the discrete choice framework that we analyze.
We study deductible choices in two contexts: auto collision (context $\cl$) and auto comprehensive (context $\cm$).
In each context $j=\cl,\cm$, we assume that there are two states of the world: one that has probability $\mu_i^j$, where an accident happens and agent $i$ faces a loss; and the other that has probability $1-\mu_i^j$, where no accident happens.
Auto collision coverage can be used to insure against loss in context $\cl$: it pays for damage in excess of the deductible to the insured vehicle caused by a collision with another vehicle or object, without regard to fault. 
Auto comprehensive coverage can be used to insure against loss in context $\cm$: it pays for damage in excess of the deductible to the insured vehicle from all other causes (e.g., theft, fire, flood, windstorm, or vandalism), without regard to fault. 
In each context, a finite set $\cD^j$ of alternatives (insurance contracts) is offered.

Conditional on risk level, i.e., given $\mu_i^j$, each alternative $\ell \in \cD^j$ is fully characterized by the pair $(\dor^{\ell j},\covx_i^{\ell j})$. 
The first element is the insurance deductible, which is the agent's out of pocket expense if a loss occurs. 
All deductibles are assumed to be less than the lowest realization of the loss and $\dor^{1j}>\dor^{2j}>\dots>\dor^{M^j j}$, with $M^j$ the total number of deductibles in context $j$. 
In collision, $M^\cl=5$ and $\dor^\cl\in\{\$1000,\$500,\$250,\$200,\$100\}$; in comprehensive, $M^\cm=6$ and $\dor^\cm\in\{\$1000,\$500,\$250,\$200,\$100,\$50\}$, for a total of $30$ bundles of offered coverages in $\cD=\cD^\cl\times\cD^\cm$.

The second element in $(\dor^{\ell j},\covx_i^{\ell j})$ is the price (insurance premium), and varies across agents.
It is important to understand the sources of such variation, because to obtain our point identification result we assume that premiums are exogenous to preferences (Assumption \ref{ass:coeff_restrict} below) and exhibit substantial variation across households (Assumptions \ref{ass:distinct_contexts}-\ref{ass:var_x} below).

First, we note that an insurance company's rating plan is subject to state regulation and oversight. 
In particular, the regulations require that a company receive prior approval of its rating plan by the state insurance commissioner, and they prohibit the company and its agents from charging rates that depart from the plan. 

Second, we describe the procedure applied by the company from which we obtained our data to rate a policy in each line of coverage.\footnote{See Section \ref{sec:data} below for additional information on the data.} 
Under the plan, within each context $j$ the company determines a household's base price $\covx_i^j$ according to a coverage-specific rating function, which takes into account agent $i$'s coverage-relevant characteristics and any applicable discounts. 
Using the base price, the company then generates the agent's pricing menu $\mathcal{M}^j\equiv\{(\dor^{\ell j},\covx_i^{\ell j}):\ell\in\cD^j\}$, which associates a premium $\covx_i^{\ell j}$ with each deductible $\dor^{\ell j}$ in the coverage-specific set of alternatives in $\cD^j$, according to an agent-invariant and coverage-specific multiplication rule, $\covx_i^{\ell j}=(g^{\ell j}\cdot\covx_i^j)+\delta^j$, where $\delta^j>0$ and $g^{\ell j}$ is increasing in $\ell$ and strictly greater than zero, so that $\covx_i^{1j}<\covx_i^{2j}<\dots<\covx_i^{M^j j}$ ($\dor^{\ell j}$ is decreasing in $\ell$, so lower deductibles provide more coverage and cost more).\footnote{The multiplicative factors $\{g^{\ell j}:\ell\in\cD^j\}$ are known as the deductible factors and $\delta^j$ is a small markup known as the expense fee.}
As $\{g^{\ell j}:\ell\in\cD^j;\delta^j\}$ are agent-invariant, there is \emph{no} independent variation in covariates across alternatives within a context.

With this as background, for given $\mu_i^j$, alternatives can be represented as lotteries:
\begin{align}
\cL(\dor^{\ell j},\covx_i^j)\equiv \left( -\covx_i^{\ell j},1-\mu_i^j ;-\covx_i^{\ell j}-\dor^{\ell j},\mu_i^j \right),\label{eq:lottery}
\end{align} 
where  $(\covx_i^j,\mu_i^j)$ is observed by the researcher for each agent $i$ and context $j$. 
Throughout, we implicitly condition on $\mu_i^j$.
We do not use variation in $\mu_i^j$ to establish our identification results, although doing so is potentially useful and the subject of ongoing research.

\subsection{Preference types with unobserved heterogeneity within type}
\label{subsection:types}
We allow the population of agents to be a \emph{mixture of preference types}.\footnote{Multiple preference types are a focus of the literature that estimates risk preferences using experimental data (e.g., \cite{Bruhin2010,Conte2011}; \cite{Harrison2010}), although preferences are homogeneous within each type, at most conditioning on some observed demographic characteristics.}
The literature has put forward many models of decision making under risk which can generate demand for insurance at actuarially unfair prices, including the workhorse expected utility theory model and a host of non-expected utility theory models.
Each of these models has relative (de)merits in rationalizing observed choices, and may deliver different predictions for counterfactual policies \citep[see][for a review]{Barseghyan2018}.
We hence think it important to provide identification results for a model where multiple preference types are allowed for, and where unobserved heterogeneity within type is also present.

For notational simplicity, we detail here the case with two preference types.
The results extend to more than two types (even when one observes choices only in two contexts).
Let each agent $i$ draw a preference type $t_i$ as follows:
\begin{align}
t_{i}=\left\{
\begin{tabular}
[c]{ll}
$1$ & with probability $\alpha$,\\
$0$ & with probability $1-\alpha$,
\end{tabular}
\right.  \label{eq:t_i}
\end{align}
with $\alpha\in(0,1)$ the unknown mixing probability.

Each realization of $t_i$ is associated with a family of utility functions with \emph{distinct functional forms}, denoted $\mathcal{U}^1=\{U_{\Epar},~\Epar\in [0,\bar \Epar]\}$ for $t_i=1$, and $\mathcal{U}^0=\{U_{\Ypar},~\Ypar\in[0,\bar \Ypar]\}$ for $t_i=0$.
Functions in each family are known up to a scalar random coefficient that depends on type, denoted $\Epar_i$ (with support $[0,\bar \Epar]$) for agents with $t_i=1$, and $\Ypar_i$ (with support $[0,\bar \Ypar]$) for agents with $t_i=0$.
For example,  in our empirical application $\mathcal{U}^1$ is the collection of expected utility functions associated with preferences that exhibit constant absolute risk aversion (CARA) with agent-specific Arrow-Pratt coefficient $\Epar_i$,\footnote{Other preferences that are characterized by a scalar parameter include ones exhibiting constant relative risk aversion (CRRA), or negligible third derivative \citep[NTD; see, e.g.,][]{Cohen2007,Barseghyan2013}.  Under CRRA, it is required that agents' initial wealth is known to the researcher. } and $\mathcal{U}^0$ is a family of non-expected utility functions that do not nest expected utility as a special case and are parametrized by $\Ypar_i$ (see Eqs.~\eqref{eq:EUT}-\eqref{eq:Yaari} and Assumptions \ref{ass:cara} \& \ref{ass:Omega} below).
As the preference types are distinct, each agent either receives a draw of $\Epar_i$ or a draw of $\Ypar_i$, hence by construction the two random coefficients are independent.
We do not impose \emph{any} parametric restrictions on their distributions.
Rather, in Section \ref{sec:identification} we provide nonparametric point identification results for the two marginal distributions of preferences and for the share of each type.
\begin{assumption}[Restrictions on distribution of random coefficients]\label{ass:coeff_restrict}
The random coefficient $\Epar_i$ (respectively, $\Ypar_i$) is distributed according to a cumulative distribution function $F$ (respectively, $G$) that satisfies the properties of CDFs,  and admits a density function $f$ that is continuous and strictly positive on $\mathcal{V}\equiv[0,\bar \Epar]$ (respectively,  $g$ strictly positive on $\mathcal{W}\equiv[0,\bar \Ypar]$).
Both $\Epar_i$ and $\Ypar_i$ are independent of $\covx_i$.\footnote{Recall that our analysis conditions on $\mu_i^j$, hence the distribution of preferences may depend on it.}
\end{assumption}

We make three fundamental assumptions about utility functions in both families.
First, we assume that households' preferences are \emph{stable} across contexts, which allows us to leverage variation in observed choices and covariates across contexts (recall that we have no covariate variation within each context).
\begin{assumption}[Stability]
\label{ass:stability}
The utility function $U_{\Epar_i}$ of each agent $i$ with $t_i=1$ (respectively, $U_{\Ypar_i}$ for agents with $t_i=0$) is \emph{context-invariant}.
\end{assumption}

Second, we need to take a stand on how agents make choices in multiple contexts.
To this end, it is important to introduce notation for \emph{bundles} of alternatives.
Denote bundles as $\cI_{\ell,q}$, where the first index refers to the alternative in context $\cl$ and the second one to that in context $\cm$.
Let $CE_{\Epar_i}(\cL(\dor^{\ell j},\covx^j))$ (respectively, $CE_{\Ypar_i}(\cL(\dor^{\ell j},\covx^j))$) denote the \emph{certainty equivalent} of lottery $\cL(\dor^{\ell j},\covx^j)$ \citep[see, e.g.,][Definition 6.C.2]{mas:whi:gre95} in context $j$ for an agent of type $t_i=1$ (respectively, $t_i=0$).
We impose a standard, albeit sometimes implicit, assumption in the literature,\footnote{All papers that estimate risk preferences in the field as reviewed in \cite{Barseghyan2018} impose it. } according to which agents' choices are made without taking into account any background risk \citep[e.g.,][]{Read1999}.
\begin{assumption}[Narrow Bracketing]
\label{ass:narrow}
Agent $i$'s certainty equivalent for the lottery associated with bundle $\cI_{\ell,q}$ is equal to $CE_{\zeta_i}(\cL(\dor^{\ell \cl},\covx^\cl))+CE_{\zeta_i}(\cL(\dor^{q \cm},\covx^\cm))$, with $\zeta_i=\Epar_i$ if $t_i=1$ and $\zeta_i=\Ypar_i$ if $t_i=0$.
\end{assumption}

Third, we assume that each preference type satisfies the classic Single Crossing Property (SCP) of \citet{mirrlees1971exploration} and \citet{spence1974market}, central to important studies of decision making under risk \citep[see, for example][]{apesteguia2017single,chiappori2019aggregate}.\footnote{The SCP is satisfied in many contexts, ranging from single agent models with goods that can be unambiguously ordered based on quality, to multiple agents models \citep[e.g.,][]{athey2001single}.} 
Formally,
\begin{assumption}[Single Crossing Property]\label{ass:SCP} For a given context $j$ and any two lotteries $\cL(\dor^{\ell j},\covx)$ and $\cL(\dor^{k  j},\covx)$, $\ell<k$, there exists a continuously differentiable and strictly monotone function $\cmap_k^\ell: \supp(\covx) \to \R_{[-\infty,\infty]}$, with $\supp(\covx)=\bigcup_{j=\cl,\cm}\supp(\covx^j)$, such that
	\begin{align*}
	U_\zeta(\cL(\dor^{k  j},\covx)) &< U_\zeta(\cL(\dor^{\ell j},\covx))  \quad \forall \zeta \in (-\infty,\cmap_k^\ell(\covx)), \\
	U_\zeta(\cL(\dor^{k  j},\covx)) &= U_\zeta(\cL(\dor^{\ell j},\covx))  \quad ~\zeta = \cmap_k^\ell(\covx), \\
	U_\zeta(\cL(\dor^{k  j},\covx))&> U_\zeta(\cL(\dor^{\ell j},\covx))  \quad \forall \zeta \in (\cmap_k^\ell(\covx),\infty).
	\end{align*}
	where $\zeta=\Epar_i$ for agents of type $t_i=1$ and $\zeta=\Ypar_i$ for type $t_i=0$.
	We refer to $\cmap_k^\ell(\cdot)$ as the cutoff between $\cL(\dor^{\ell j},\covx)$ and $\cL(\dor^{k j},\covx)$, and denote it $\EEcut_k^\ell(\cdot)$ for $t_i=1$ and $\YYcut_k^\ell(\cdot)$ for $t_i=0$.\footnote{We assume that while $\Epar$ and $\Ypar$ have bounded support, the utility functions in $\mathcal{U}^1$ and $\mathcal{U}^0$ are well defined for any real valued $\Epar$ and $\Ypar$, respectively.}
\end{assumption}
Within a single context, the expected utility theory framework generally satisfies the SCP, which requires that if an agent with a certain degree of risk aversion (the random coefficient $\Epar_i$) prefers a safer lottery to a riskier one, then all agents with higher risk aversion also prefer the safer lottery.\footnote{For a discussion of possible failures of SCP, see \citet{ape:bal18}. }
The same is true for the non-expected utility theory model that we use in our empirical analysis in Section \ref{sec:empirical}.
The SCP implies that \emph{within a single context}, the household's ranking of alternatives is monotone in $\Epar_i$ for $t_i=1$ and in $\Ypar_i$ for $t_i=0$, yielding vertical differentiation of alternatives within each preference type.  

\subsection{Unobserved heterogeneity in consideration sets}
\label{subsection:consideration}
Across contexts, the subset of alternatives actually available to each agent is unknown to the researcher, due, e.g., to unobserved budget constraints, liquidity constraints, etc.
Moreover, agents face an overall large and potentially overwhelming universe of feasible alternatives, leading to choice overload, cognitive ability constraints, etc.
Hence, we allow for \emph{unobserved heterogeneity in consideration sets}, i.e., in the collection of alternatives that the agents evaluate when making their choices.
We denote the overall universe of alternatives across contexts as
$\cD\equiv\cD^\cl\times\cD^\cm$, with $\cI_{\ell,q}$ denoting each of the bundles in $\cD$.
\begin{assumption}[Consideration set formation mechanism]\label{ass:consideration}
Conditional on $t_i$, agent $i$ draws a consideration set $C_i\subseteq\cD$ independently from its random coefficient and from $\covx_i$ s.t.
\begin{align*}
\mathcal{Q}_{1}(\cK)&\equiv\Pr(C_i=\cK|t_i=1)=\Pr(C_i=\cK|\covx_i,\Epar_i,t_i=1),~~~\cK\subseteq \cD,\\
\mathcal{Q}_{0}(\cK)&\equiv\Pr(C_i=\cK|t_i=0)=\Pr(C_i=\cK|\covx_i,\Ypar_i, t_i=0),~~~\cK\subseteq \cD.
\end{align*}
\end{assumption}
The fundamental restrictions imposed in Assumption \ref{ass:consideration} are that conditional on preference type, consideration is independent of the agent's random coefficient and of the observed covariate $\covx$.\footnote{Recall that our analysis conditions on $\mu_i^j$, hence the distribution of consideration sets may depend on it.}
However, the distribution of consideration sets may depend on preference type.
Importantly, we allow consideration to be \emph{broad}, as it is determined at the bundle level instead of within context.
A significantly more restrictive approach would posit that consideration is \emph{narrow}: agent $i$ draws a pair of consideration sets 
$C_i^j\in\cD^j$, $j=\cl,\cm$ independently across contexts, and forms $C_i=C_i^\cl\times C_i^\cm$.
As we further discuss in Section \ref{subsec:optimal_choice}, allowing consideration sets to be drawn at the bundle level substantially complicates the identification analysis, but delivers a more realistic model.

\subsection{Optimal choice within the consideration set}
\label{subsec:optimal_choice}
Once the consideration set is drawn, each agent chooses the best alternative in each context according to their preferences.
\begin{align}
\cI^\ast\equiv[\ell^*,q^*]=\arg\underset{[\ell,q]\in C}{\max}CE_{\zeta}(\cL(\dor^{\ell\cl},\covx^\cl))+CE_{\zeta}(\cL(\dor^{q \cm},\covx^\cm)),\label{eq:max_CE}
\end{align}
where $\zeta=\Epar$ if $t=1$ and $\zeta=\Ypar$ if $t=0$.
The bundle choice $\cI^*$ depends on the agent's preference type, random coefficient, consideration set, associated premium-deductible tuples, and claim probabilities $\mu^j,~j=\cl,\cm$.

The flexible model of consideration set formation that we allow for has important implications for the choice problem in Eq.~\eqref{eq:max_CE}.
If we were to assume narrow consideration, hence restrict agents to draw consideration sets \emph{independently} across contexts, the choice problems would break into independent, context-specific decisions, with
\begin{align}
\ell^{*j}=\arg\underset{\ell\in C^j}{\max} ~CE_{\zeta}(\cL(\dor^{\ell^j},\covx^j)).\label{eq:choice_narrow_C}
\end{align}
Each context-specific choice problem satisfies the SCP in Assumption \ref{ass:SCP}.
\citet{BaMoTh21} offer a comprehensive analysis of the implications of the SCP for semi-nonparametric identification of a model of discrete choice under risk that features a single preference type and unobserved heterogeneity in consideration sets. 
Even in the simplified framework where consideration is narrow, our analysis extends theirs as we allow for multiple preference types.
More importantly, a narrow model of consideration implies that very similar alternatives in different contexts enter the consideration set independently.\footnote{For example,  a \$500 deductible at price $\covx^\cl$ in collision insurance and a \$500 deductible at price $\covx^\cm$ in comprehensive insurance would enter the consideration set independently.}
This assumption is unpalatable, particularly when analyzing demand for bundled products.
We therefore allow for broad consideration.
In doing so, we overcome a substantial hurdle relative to \citet{BaMoTh21}.
When consideration is broad and $C_i$ is formed at the bundle level, the SCP may not necessarily hold across tuples of alternatives, because alternatives may not be monotonically ranked against each other (with respect to $\Epar_i$ or $\Ypar_i$). 
Hence, here we develop a new approach to obtain point identification of the distribution of preferences, shares of preferences types, and features of the distribution of consideration sets given type. 

\section{Identification Results}\label{sec:identification}
We begin by describing the conditions under which we can prove our point identification results.\footnote{The results extend easily to more than two contexts, at the cost of heavier notation.}
We index bundles as $\cI_{\ell,q}$ and $\cI_{k,r}$, with $\ell,k\in\cD^\cl$ alternatives in context $\cl$ and $q,r\in\cD^\cm$ alternatives in context $\cm$.
We recall that in each context, $\dor^{1j}>\dor^{2j}>\dots>\dor^{M^j j}$ and $\covx_i^{1j}<\covx_i^{2j}<\dots<\covx_i^{M_j j}$, see Section \ref{subsection:lotteries}.
We let $\EEcut^{\ell,q}_{k,r}(\covx)$ and $\YYcut^{\ell,q}_{k,r}(\covx)$ denote cutoff levels for $\Epar_i$ and $\Ypar_i$, respectively, at which the agent is indifferent between bundles $\cI_{\ell,q}$ and $\cI_{k,r}$.
Hence, under Assumption \ref{ass:narrow}, the cutoff $\EEcut^{\ell,q}_{k,r}(\covx)$ is such that (and similarly for $\YYcut^{\ell,q}_{k,r}(\covx)$):
\begin{multline}
CE_{\EEcut^{\ell,q}_{k,r}(\covx)}(\cL(\dor^{\ell \cl},\covx^\cl))+CE_{\EEcut^{\ell,q}_{k,r}(\covx)}(\cL(\dor^{q \cm},\covx^\cm))\\
=CE_{\EEcut^{\ell,q}_{k,r}(\covx)}(\cL(\dor^{k \cl},\covx^\cl))+CE_{\EEcut^{\ell,q}_{k,r}(\covx)}(\cL(\dor^{r \cm},\covx^\cm)).\label{eq:def_EEcut_CE}%\\
%CE_{\YYcut^{\ell,q}_{k,r}(\covx)}(\cL(\dor^{\ell \cl},\covx^\cl))+CE_{\YYcut^{\ell,q}_{k,r}(\covx)}(\cL(\dor^{q \cm},\covx^\cm))&=CE_{\YYcut^{\ell,q}_{k,r}(\covx)}(\cL(\dor^{k \cl},\covx^\cl))+CE_{\YYcut^{\ell,q}_{k,r}(\covx)}(\cL(\dor^{r \cm},\covx^\cm))
\end{multline}
Relative to the cutoffs introduced in Assumption \ref{ass:SCP}, which compared alternatives within a single context and we denoted $\EEcut^\ell_k(\covx)$ (single superscript and subscript for a single context of choice), we have $\EEcut^{m,q}_{m,r}(\covx)=\EEcut^q_r(\covx)$ and $\EEcut^{\ell,s}_{k,s}(\covx)=\EEcut^\ell_k(\covx)$ for all $m\in\cD^\cl$ and $s\in\cD^\cm$ (and similarly for $\YYcut^{\cdot,\cdot}_{\cdot,\cdot}(\covx)$).
While cutoffs $\EEcut^{\ell,q}_{k,r}(\covx)$ and $\YYcut^{\ell,q}_{k,r}(\covx)$ for $\ell\neq k,q\neq r$ depend on both $\covx^\cl$ and $\covx^\cm$, cutoffs $\EEcut^{\ell,s}_{k,s}(\covx)$ and $\YYcut^{\ell,s}_{k,s}(\covx)$ depend \emph{only} on $\covx^\cl$, while cutoffs $\EEcut^{m,q}_{m,r}(\covx)$ and $\YYcut^{m,q}_{m,r}(\covx)$ depend \emph{only} on $\covx^\cm$.
These properties will be used to establish our identification results.

We remark that the cutoffs $\EEcut^{\ell,q}_{k,r}(\covx)$ and $\YYcut^{\ell,q}_{k,r}(\covx)$ may not be unique if $\ell>k$ but $q<r$ (or vice versa). However, they are unique whenever $\cI_{1,1}$ is compared with \emph{any} other bundle (and similarly whenever $\cI_{M^\cl,M^\cm}$ is compared with any other bundle).

Throughout, we assume that the researcher has access to data that identify the joint distribution of chosen bundles and covariates. 
The consideration set, however, is not observed.
\begin{assumption}[Observed data]
\label{ass:DGP}A random sample $\{(\cI^\ast_i,\covx_i^\cl,\covx_i^\cm):i=1,\dots,n\}$ is observed, with $\cI^\ast_i$, as defined in Eq.~\eqref{eq:max_CE}.
\end{assumption}

\subsection{Restrictions on variation in $\covx^j$ across contexts}
\label{subsec:ass_x_model}
Identification of the model's functionals rests on the interplay between the model and the variation in the observed covariates.
We only require the covariates $\covx_i\equiv(\covx_i^\cl,\covx_i^\cm)$ to vary across agents and contexts, as formally stated below, but allow $\covx_i^\cl$ (respectively, $\covx_i^\cm$) to be constant across alternatives within $\cD^\cl$ (respectively, $\cD^\cm$).
Hence, one needs sufficient variation across contexts to obtain point identification results.
\begin{assumption}[Preferred within a triplet]\label{ass:not_dominated}
In each context $j\in\I$, for any $\covx$ and triplet $\{\dor^{1j},\dor^{kj},\dor^{(k+1)j}\}$, $\forall k\in \{2,...,M^j-1\}$, there are values of $\Epar$ (and $\Ypar$) at which each alternative in this triplet is strictly preferred to the other two.   
\end{assumption}
Assumption \ref{ass:not_dominated} requires that given three coverage levels including the cheapest, each one is preferred by at least some agent.
As shown in \citet{BaMoTh21}, under Assumption \ref{ass:SCP}, this condition is satisfied for agents of type $t_i=1$ within context $\cl$ if and only if $-\infty<\EEcut^{1,1}_{2,1}(\covx^\cl)<\EEcut^{1,1}_{3,1}(\covx^\cl)<\EEcut^{1,1}_{4,1}(\covx^\cl)\dots<+\infty$ (and similarly for agents of type $t_i=0$, and for context $\cm$ with appropriate modifications in the compared bundles and evaluation at $\covx^\cm$ instead of $\covx^\cl$), with $\EEcut^{\ell,q}_{k,r}$ defined through Eq.~\eqref{eq:def_EEcut_CE}. 
So, any agent of type $t_i=1$ who draws $\Epar<\EEcut^{1,1}_{2,1}(\covx^\cl)$ unambiguously prefers alternative $\ell^{1 \cl}$ to any other in $\cD^\cl$.

In what follows, an important role is played by the values of $\covx=(\covx^\cl,\covx^\cm)$ at which the indifference cutoff for an agent of type $t_i$ between alternatives $\ell^{1 \cl}$ and $\ell^{2 \cl}$ (the two cheapest alternatives in context $\cl$) is equal to that agent's indifference cutoff between alternatives $\ell^{1 \cm}$ and $\ell^{2 \cm}$ (the two cheapest alternatives in context $\cm$).
We first define these values of $\covx$, and then make assumptions on the support of $\covx$ to guarantee that it includes them.
\begin{definition}[Covariate values delivering indifference]\label{def:indifferenceX}
Given $t_i$, fix a value of $\Epar\in[0,\bar\Epar]$ if $t_i=1$ and of $\Ypar\in[0,\bar\Ypar]$ if $t_i=0$.
Let the set of covariate values at which the agent has preference $\Epar$ (respectively,  $\Ypar$) and is indifferent between bundles $\cI_{1,1}$, $\cI_{1,2}$, and $\cI_{2,1}$, be: 
	\begin{align*} 
	 \Covx^1(\Epar)&\equiv\{(\covx^\cl,\covx^\cm):  \EEcut^{1,1}_{2,1}(\covx^\cl)=\EEcut^{1,1}_{1,2}(\covx^\cm)=\Epar\},\\
	 \Covx^0(\Ypar)&\equiv\{(\covx^\cl,\covx^\cm):  \YYcut^{1,1}_{2,1}(\covx^\cl)=\YYcut^{1,1}_{1,2}(\covx^\cm)=\Ypar\}.
	\end{align*}
	\end{definition}
The covariate values $\Covx^1(\Epar)$ (respectively, $\Covx^0(\Ypar)$) are the values of $\covx=(\covx^\cl,\covx^\cm)$ at which an agent with preferences $\Epar$ (respectively, $\Ypar$) is indifferent between the two cheapest coverage levels in context $\cl$ and, at the same time, also in context $\cm$. 
In other words, the agent is indifferent between $\cI_{1,1}$, $\cI_{1,2}$, and $\cI_{2,1}$ (and, hence, $\cI_{2,2}$).
Given the single crossing property in Assumption \ref{ass:SCP}, within each context it is immediate to see that both elements of $\Covx^1(\Epar)$ (the covariate value in context $\cl$ and the covariate value in context $\cm$) are strictly monotone in $\Epar$ (and, similarly, both elements of $\Covx^0(\Ypar)$ are monotone in $\Ypar$).
For example, the higher is $\Epar$, the higher is the base price in context $\cl$ at which the agent with random coefficient $\Epar$ is indifferent between $\cI_{1,1}$ and $\cI_{2,1}$.
Hence, we can represent $\Covx^1(\Epar)$ (respectively, $\Covx^0(\Ypar)$) as a strictly monotone function on the support of $(\covx^\cl,\covx^\cm)$.\footnote{See Figure \ref{Fig:boldX} and its discussion below.}
We assume that these strictly monotone functions intersect on a set of measure zero.
\begin{assumption}[Distinct contexts]\label{ass:distinct_contexts}
The contexts are distinct, in the sense that:
\begin{enumerate}[label=(\Roman*)]
\item\label{distinct_minimal} $\Covx^1(\Epar)\neq\Covx^0(\Ypar)~a.e.$
\item\label{distinct_sufficient} The following four conditions are satisfied:
		\begin{align}
		\EEcut^{1,1}_{\ell,q}(\Covx^0(\Ypar))&\neq\EEcut^{1,1}_{k,r}(\Covx^0(\Ypar)) ~a.e.\label{eq:distinct1}\\ 
		\YYcut^{1,1}_{\ell,q}(\Covx^1(\Epar))&\neq\YYcut^{1,1}_{k,r}(\Covx^1(\Epar)) ~a.e. \label{eq:distinct2}\\
		\EEcut^{1,1}_{\ell,q}(\Covx^1(\Epar))&\neq\EEcut^{1,1}_{k,r}(\Covx^1(\Epar)) ~a.e. ~\forall \{\ell,q,k,r\} ~s.t.~\{\ell,q,k,r\}\setminus \{1,2\}\neq\emptyset.\label{eq:distinct3}\\
		\YYcut^{1,1}_{\ell,q}(\Covx^0(\Ypar))&\neq\YYcut^{1,1}_{k,r}(\Covx^0(\Ypar)) ~a.e. ~\forall \{\ell,q,k,r\} ~s.t.~\{\ell,q,k,r\}\setminus \{1,2\}\neq\emptyset.\label{eq:distinct4}
	\end{align}
\end{enumerate}
\end{assumption}
Assumption \ref{ass:distinct_contexts}-\ref{distinct_sufficient} implies Assumption \ref{ass:distinct_contexts}-\ref{distinct_minimal}, as Eqs.~\eqref{eq:distinct1}-\eqref{eq:distinct2} for $\ell,q=2,1$ and $k,r=1,2$ imply $\Covx^1(\Epar)\neq\Covx^0(\Ypar)~a.e.$ 
Both conditions require that at any value of $(\covx^\cl,\covx^\cm)$ at which indifference across $\cI_{1,1}$, $\cI_{1,2}$, and $\cI_{2,1}$ occurs for an agent of type $t_i=1$, such indifference cannot occur for an agent of type $t_i=0$. 
Additionally, Assumption \ref{ass:distinct_contexts}-\ref{distinct_sufficient} requires that at any value of $(\covx^\cl,\covx^\cm)$ at which indifference across $\cI_{1,1}$, $\cI_{1,2}$, and $\cI_{2,1}$ occurs, no other triplet of bundles including $\cI_{1,1}$ can generate a three-way tie in utility ranking. Given the data and utility models across preference types, one can directly check whether Assumption \ref{ass:distinct_contexts} is satisfied. 
\begin{figure}
	\centering
	\adjincludegraphics[scale=.35]{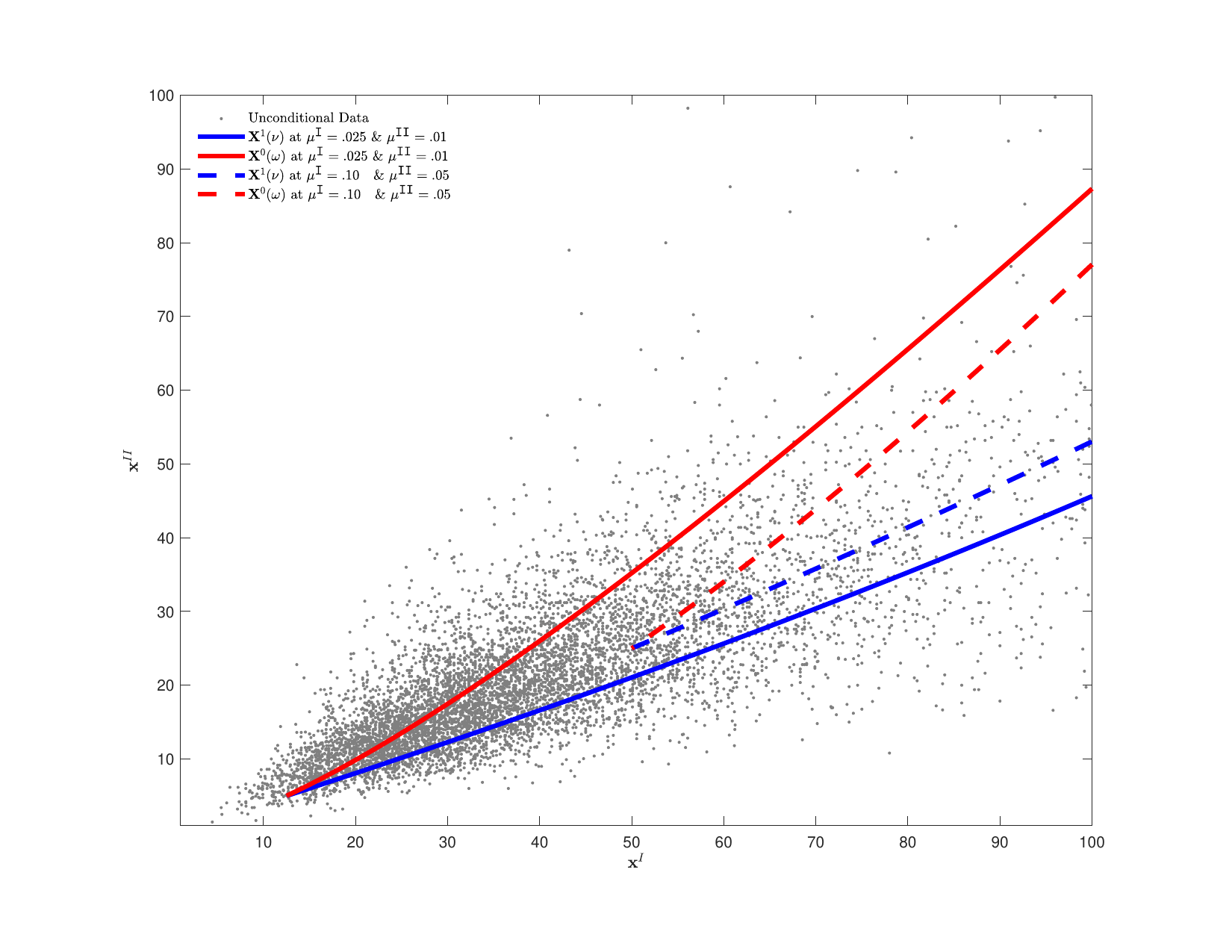}
\caption{\small{$\Covx^0(\Ypar)$ and $\Covx^1(\Epar)$ in our application, with data in the background.}}\label{Fig:boldX}
\end{figure}
Finally, we require that the support of $\covx$ is sufficiently rich, as point identification of $f(\Epar)$ and $g(\Ypar)$ can only occur at values of $\Epar$ and $\Ypar$ that belong, respectively, to intervals $[\Epar^*,\Epar^{**}]\subseteq [0,\bar\Epar]$ and $[\Ypar^*,\Ypar^{**}]\subseteq[0,\bar\Ypar]$ satisfying the next assumption.
\begin{assumption}[Independent variation in $\covx$]\label{ass:var_x}
Let $[\Epar^*,\Epar^{**}]\subseteq [0,\bar\Epar]$ and $[\Ypar^*,\Ypar^{**}]\subseteq[0,\bar\Ypar]$ be intervals such that, for some $\epsilon>0$, the random vector $\covx=(\covx^\cl,\covx^\cm)$ has strictly positive density on the sets $\mathcal{S}^1_\epsilon(\Epar^*,\Epar^{**})\subset \R^2$ and $\mathcal{S}^0_\epsilon(\Ypar^*,\Ypar^{**})\subset \R^2$, with
\begin{align*}
\mathcal{S}^1_\epsilon(\Epar^*,\Epar^{**})&=\left\{\B_\epsilon(\Covx^1(\Epar)),~\Epar\in[\Epar^*,\Epar^{**}]\right\},\\
\mathcal{S}^0_\epsilon(\Ypar^*,\Ypar^{**})&=\left\{\B_\epsilon(\Covx^0(\Ypar)),~\Ypar\in[\Ypar^*,\Ypar^{**}]\right\}.
\end{align*}
where $\B_a(c)$ denotes a ball in $\R^2$ of radius $a$ centered at $c$.
\end{assumption}
Assumption \ref{ass:var_x} guarantees that for each $\Epar\in[\Epar^*,\Epar^{**}]$ there are values of $\covx$ such that $\Covx^1(\Epar)$ is non-empty and that there is an $\epsilon$-neighborhood around $\Covx^1(\Epar)$ with positive density (and similarly for $\Covx^0(\Ypar)$ and all $\Ypar\in[\Ypar^*,\Ypar^{**}]$). This yields sufficient observed variation in $\covx$ to identify the functionals that we are after. 
We illustrate the notion of distinct contexts and independent variation in $\covx$ via Figure \ref{Fig:boldX}, which depicts $\Covx^0(\Ypar)$ and $\Covx^1(\Epar)$ drawn for different pairs of $\mu$'s. First, $\Covx^0(\Ypar)$ and $\Covx^1(\Epar)$ intersect only at a single point.\footnote{In our empirical model described in Section \ref{sec:empirical}, this intersection point corresponds to $\Epar=0$ and $\Ypar=1$, i.e., respectively,  no risk aversion and no probability distortions. }
Second, these curves are both monotone. 
We present them with a scatterplot of unconditional data from our empirical application in the background, to highlight the fact that even when variation in $\covx$ does not cover the entire $\R^2_{+}$, identification is attainable since Assumption \ref{ass:var_x} requires variation in $\covx$ only to cover respective neighborhoods of $\Covx^0(\Ypar)$ and $\Covx^1(\Epar)$. 
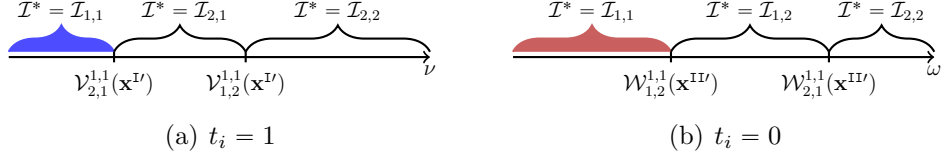
\begin{figure}
\centering
\subfigure[$t_i=1$]{
\begin{tikzpicture}[thick,scale=.7, every node/.style={transform shape}]
\draw[->] (0,0) -- (8,0) node[below] {$\Epar$};
	\draw[-] (2,-.1) -- (2,.1) ;	
	\draw[-] (4.5,-.1) -- (4.5,.1) ;	
	\draw (2,-.1) node[below] {$\EEcut^{1,1}_{2,1}(\covx^{\cl \prime})$};
	\draw (4.5,-.1) node[below] {$\EEcut^{1,1}_{1,2}(\covx^{\cl \prime})$};

	\draw (1,.5) node[above] {$\cI^*=\cI_{1,1}$};
	\draw (3.35,.5) node[above] {$\cI^*=\cI_{2,1}$};
	\draw (6.25,.5) node[above] {$\cI^*=\cI_{2,2}$};
	
	\draw[decoration={brace,amplitude=10pt,raise=2pt},decorate,blue!70,fill] (0,0) -- (2,0);
	\draw[decoration={brace,amplitude=10pt,raise=2pt},decorate] (2,0) -- (4.5,0);
	\draw[decoration={brace,amplitude=10pt,raise=2pt},decorate] (4.5,0) -- (8,0);
\end{tikzpicture}}
\hspace{.5cm}
\subfigure[$t_i=0$]{
\begin{tikzpicture}[thick,scale=.7, every node/.style={transform shape}]
\draw[->] (0,0) -- (8,0) node[below] {$\Ypar$};
	\draw[-] (3,-.1) -- (3,.1) ;	
	\draw[-] (6,-.1) -- (6,.1) ;	
	\draw (6,-.1) node[below] {$\YYcut^{1,1}_{2,1}(\covx^{\cm \prime})$};
	\draw (3,-.1) node[below] {$\YYcut^{1,1}_{1,2}(\covx^{\cm \prime})$};

	\draw (1.5,.5) node[above] {$\cI^*=\cI_{1,1}$};
	\draw (4.5,.5) node[above] {$\cI^*=\cI_{1,2}$};
	\draw (7,.5) node[above] {$\cI^*=\cI_{2,2}$};
	
	\draw[decoration={brace,amplitude=10pt,raise=2pt},decorate,cornellred!70,fill] (0,0) -- (3,0);
	\draw[decoration={brace,amplitude=10pt,raise=2pt},decorate] (3,0) -- (6,0);
	\draw[decoration={brace,amplitude=10pt,raise=2pt},decorate] (6,0) -- (8,0);
\end{tikzpicture}}
\caption{Stylized depiction of regions where, under full consideration, %and Assumptions \ref{ass:SCP} and \ref{ass:not_dominated}, 
$\cI^*=\cI_{1,1}$.}\label{Fig:full}
\end{figure}

We next explain why, under full consideration, our assumptions suffice for identification of the share of preference types and the distributions of the respective random coefficients.
Fix a value of $\Epar\in[\Epar^*,\Epar^{**}]$ at which one wants to learn $f(\Epar)$.
Under Assumption \ref{ass:var_x}, $\Covx^1(\Epar)$ is non-empty and there is an $\epsilon$-ball of positive density around it.
Along with Assumption \ref{ass:distinct_contexts}, this implies that there is a vector $(\covx^{\cl \prime},\covx^{\cm\prime})\in \B_\epsilon(\Covx^1(\Epar))$ such that $\Epar=\EEcut^{1,1}_{2,1}(\covx^{\cl \prime})<\EEcut^{1,1}_{1,2}(\covx^{\cm \prime})$ and $\YYcut^{1,1}_{2,1}(\covx^{\cl \prime})>\YYcut^{1,1}_{1,2}(\covx^{\cm \prime})$. 
Then, as shown in Figure \ref{Fig:full}, under Assumptions \ref{ass:SCP} and \ref{ass:not_dominated},\footnote{Recall that these assumptions, jointly, imply that any agent who draws $\Epar<\EEcut^{1,1}_{2,1}(\covx^{\cl \prime})<\EEcut^{1,1}_{1,2}(\covx^{\cm \prime})$ unambiguously prefers alternative $\ell^{1\cl}$ to all other alternatives in $\cD^\cl$, unambiguously prefers alternative $\ell^{1\cm}$ to all other alternatives in $\cD^\cm$, and therefore unambiguously prefers bundle $\cI_{1,1}$ to any other bundle in $\cD$.} 
\begin{align}
\Pr(\cI^*=\cI_{1,1}|\covx^\prime)&=\alpha F(\EEcut^{1,1}_{2,1}(\covx^{\cl \prime}))+(1-\alpha)G(\YYcut^{1,1}_{1,2}(\covx^{\cm \prime})).\label{eq:choice_prob_full}
\end{align}
In turn, owing to the fact that $\EEcut^{1,1}_{2,1}(\covx^{\cl \prime})$ depends on $\covx^\cl$ but $\YYcut^{1,1}_{1,2}(\covx^{\cm \prime})$ does not, this yields
\begin{align}
\frac{\partial \Pr(\cI^*=\cI_{1,1}|\covx^\prime)}{\partial \covx^{\cl}}=\alpha f(\Epar)\frac{\partial \EEcut^{1,1}_{2,1}(\covx^{\cl \prime})}{\partial \covx^{\cl}},\label{eq:deriv_choice_prob_full}
\end{align}
where the term $\frac{\partial\EEcut^{1,1}_{2,1}(\covx^{\cl \prime})}{\partial \covx^{\cl}}$ is a known function of $\covx^\cl$ and is different from zero due to Assumption \ref{ass:SCP} (where cutoff functions are assumed to be strictly monotone in $\covx$).
If $[\Epar^*,\Epar^{**}]=[0,\bar\Epar]$, one can repeat the above argument for all $\Epar$ on the support and then use the fact that $f(\Epar)$ integrates to one to learn $\alpha$.
One can similarly learn $g(\Ypar)$, $\Ypar\in[\Ypar^*,\Ypar^{**}]$.

\subsection{Restrictions on the consideration set formation mechanism}
In the presence of limited consideration, the above argument does not directly apply, as one needs to account for all possible consideration sets in which bundle $\cI_{1,1}$ is included.
We therefore need to introduce additional notation and some restrictions.

For any $\cK_1,\cK_2\subseteq \cD$, $\cK_1\cap\cK_2=\emptyset$, denote the probability that all elements of $\cK_1$ are included in the consideration set while all elements of $\cK_2$ are excluded from it, by
\begin{align*}
\cOE(\cK_1;\cK_2)\equiv\sum_{\cK: ~\cK_1 \subset \cK, ~ \cK_2 \cap \cK =\emptyset}\Pr(C_i=\cK|t_i=1)=\sum_{\cK: ~\cK_1 \subset \cK, ~ \cK_2 \cap \cK =\emptyset}\mathcal{Q}_{1}(\cK),
\end{align*}
and define $\cOY(\cK_1;\cK_2)$ similarly, where $\mathcal{Q}_{t}(\cK)$, $t=0,1$, was introduced in Assumption \ref{ass:consideration}.

Denote by $\cB(\cI_{\ell,q},\covx;\zeta)$ the collection of bundles that, at a given value of $\zeta$, strictly dominate bundle $\cI_{\ell,q}$, with $\zeta=\Epar_i$ for agents of type $t_i=1$, and $\zeta=\Ypar_i$ for $t_i=0$:
		\begin{align*}
			\cB(\cI_{\ell,q},\covx;\zeta)&\equiv\{\cI_{k,r} ~s.t.~ CE_\zeta(\cI_{k,r},\covx)> CE_\zeta(\cI_{\ell,q},\covx)\}.
		\end{align*}
Then, for a given value of $\covx$, any bundle $\cI_{\ell,q}\in\cD$ is chosen if and only if it is considered and every bundle that dominates it is not:\footnote{Equivalently, bundle $\cI_{\ell,q}$ is chosen if and only if it is the first best among the ones considered:	
	\begin{align*}
		\Pr(\cI^*=\cI_{\ell,q}|\covx) &=\alpha  \sum\limits_{\cI_{\ell,q}\in\cK} \mathcal{Q}_{1}(\cK)\int \one(CE_\Epar(\cI_{k,r},\covx)\leq CE_\Epar(\cI_{\ell,q},\covx)~ \forall \cI_{k,r}\in \cK| \covx;\Epar)dF\\
		&+(1-\alpha) \sum\limits_{\cI_{\ell,q}\in\cK} \mathcal{Q}_{0}(\cK) \int \one(CE_\Ypar(\cI_{k,r},\covx)\leq CE_\Ypar(\cI_{\ell,q},\covx)~ \forall \cI_{k,r}\in \cK| \covx;\Ypar)dG.
	\end{align*}	
}
	\begin{align}
		\Pr(\cI^*=\cI_{\ell,q}|\covx)&= \alpha \int \cOE(\cI_{\ell,q};\cB(\cI_{\ell,q},\covx;\Epar))dF+ (1-\alpha)\int \cOY(\cI_{\ell,q};\cB(\cI_{\ell,q},\covx;\Ypar))dG.\label{eq:choice_prob}
	\end{align}
Eq.~\eqref{eq:choice_prob} with $(\ell,q)=(1,1)$ shows that $\cI_{1,1}$ is chosen when it is the bundle in $C_i$ with the highest certainty equivalent, i.e., no bundle that yields a higher certainty equivalent (those in $\cB(\cI_{1,1},\covx;\cdot)$) is considered.
Hence, an agent choosing $\cI_{1,1}$ switches to or from a different bundle $\cI_{k,r}$ if and only if (i) they are indifferent between $\cI_{1,1}$ and $\cI_{k,r}$; and (ii) they do not consider any bundle in $\cD$ that dominates $\cI_{1,1}$ and $\cI_{k,r}$. 
As the indifference cutoffs involving bundle $\cI_{1,1}$ are unique, differentiating Eq.~\eqref{eq:choice_prob} we have
\begin{align}
	\frac{\partial\Pr(\cI^*=\cI_{1,1}|\covx)}{\partial \covx^{\cl}} 
	&= \alpha \sum\limits_{(k,r)\neq(1,1)} \cOE(\{\cI_{1,1},\cI_{k,r}\};\cB(\cI_{1,1},\covx;\EEcut^{1,1}_{k,r}))f(\EEcut^{1,1}_{k,r})\frac{\partial \EEcut^{1,1}_{k,r}}{\partial \covx^{\cl}}\notag\\
	&+(1-\alpha) \sum\limits_{(k,r)\neq(1,1)} \cOY(\{\cI_{1,1},\cI_{k,r}\};\cB(\cI_{1,1},\covx;\YYcut^{1,1}_{k,r}))g(\YYcut^{1,1}_{k,r})\frac{\partial \YYcut^{1,1}_{k,r}}{\partial \covx^{\cl}}.\label{Derivate_ij}
\end{align}
The summation in Eq.~\eqref{Derivate_ij} collects all relevant consideration sets across preference types and indifference points (cutoffs), weighted by the density function at these indifference points and taking into account how the change in $\covx^\cl$ affects the indifference points themselves.\footnote{For $\frac{\partial\Pr(\cI^*=\cI_{1,1}|\covx)}{\partial \covx^{\cm}}$, the right-hand-side of Eq.~\eqref{Derivate_ij} remains as is, with $\partial \covx^\cm$ replacing $\partial \covx^{\cl}$.}

We impose the following restrictions on the consideration set formation mechanism:
\begin{assumption}[Minimally informative consideration]\label{ass:MIC}
One of the following holds: 
\begin{enumerate}[label=(\Roman*)]
\item \label{MIC1}
$\cOE(\{\cI_{1,1},\cI_{2,2},\cI_{2,1}\};\emptyset)=\cOE(\{\cI_{1,1},\cI_{2,2},\cI_{1,2}\};\emptyset)>0.$
\item \label{MIC2}
$\cOE(\{\cI_{1,1},\cI_{2,2},\cI_{2,1}\};\emptyset)-\cOE(\{\cI_{1,1},\cI_{2,2},\cI_{1,2}\};\emptyset)\neq 0$, and
\item[]$\cOE(\{\cI_{1,1},\cI_{2,1}\};\emptyset)=\cOE(\{\cI_{1,1},\cI_{2,1}\}; \{\cI_{2,2},\cI_{1,2}\})$.\footnote{Alternatively,  $\cOE(\{\cI_{1,1},\cI_{1,2}\};\emptyset)=\cOE(\{\cI_{1,1},\cI_{1,2}\}; \{\cI_{2,2},\cI_{2,1}\})$ can replace the last condition in Assumption \ref{ass:MIC}-\ref{MIC2}. 
In our application this alternative restriction is satisfied because bundle $\cI_{1,2}$ (which is the deductible bundle $\{\$1000,\$500\}$) is chosen with probability zero, and hence both probabilities are zero.}
\end{enumerate}
One of these two restrictions also holds with $\cOY$ replacing $\cOE$.
\end{assumption}
Assumption \ref{ass:MIC}-\ref{MIC1} requires symmetry in the probability with which the triplets $(\cI_{1,1},\cI_{2,2},\cI_{1,2})$ and $(\cI_{1,1},\cI_{2,2},\cI_{2,1})$ are included in the consideration set, and that each probability is strictly positive, so that information can be extracted through the differentiation in Eq.~\eqref{Derivate_ij}.
Assumption \ref{ass:MIC}-\ref{MIC2} requires that if such symmetry is absent,  then alternatives $\cI_{1,1}$ and $\cI_{2,1}$ can only be considered together when neither $\cI_{1,2}$ nor $\cI_{2,2}$ are considered (a trivial case that would guarantee this condition is that $\cI_{2,1}$ is never considered when $\cI_{1,1}$ is).
The conditions in Assumption \ref{ass:MIC} are sufficient (together with the other assumptions listed above) for our identification results.
However, they can be replaced by technical yet verifiable assumptions on the behavior of the cutoffs involving comparisons of alternatives $\cI_{1,1},\cI_{2,1},\cI_{1,2},\cI_{2,2}$.\footnote{These conditions are available from the authors upon request, and require that $\partial\EEcut^{1,1}_{1,2}(\covx)/\partial \covx^{\cm}$ does not equal a specific linear function of $\partial\EEcut^{1,1}_{2,1}(\covx)/\partial \covx^{\cl}$.}
%We relegate the presentation of such conditions to Remark \ref{remark:no_MIC} in the Appendix (see p.~\pageref{remark:no_MIC}).

\subsection{Point identification results}\label{subsec:identif_results}
We next state our main identification results, whose proofs are in the Appendix.
\begin{theorem}\label{T1alt}
	Let Assumptions \ref{ass:coeff_restrict}, \ref{ass:stability}, \ref{ass:narrow}, \ref{ass:SCP}, \ref{ass:consideration}, \ref{ass:DGP}, \ref{ass:not_dominated}, \ref{ass:distinct_contexts}, \ref{ass:var_x}, \ref{ass:MIC} hold.  Then
	\begin{enumerate}
		\item $f(\cdot)$ is identified up to scale on any interval $[\Epar^*,\Epar^{**}]$ satisfying Assumption \ref{ass:var_x}.
		\item $g(\cdot)$ is identified up to scale on any interval $[\Ypar^*,\Ypar^{**}]$ satisfying Assumption \ref{ass:var_x}.
		\item If $[\Epar^*,\Epar^{**}]=[0,\bar\Epar]$ and $[\Ypar^*,\Ypar^{**}]=[0,\bar \Ypar]$, then $f(\cdot)$ and $g(\cdot)$ are identified.
	\end{enumerate}
\end{theorem}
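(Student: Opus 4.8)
The plan is to extract everything from the choice probability of the cheapest bundle $\cI_{1,1}$, Eq.~\eqref{eq:choice_prob}, and its two partial derivatives, Eq.~\eqref{Derivate_ij}, both of which are known objects by Assumption \ref{ass:DGP}. I would first establish statements 1 and 2 (identification up to scale) and then deduce statement 3 from them by a density normalization. Throughout, the driving idea is the same as in the full-consideration benchmark of Eqs.~\eqref{eq:choice_prob_full}--\eqref{eq:deriv_choice_prob_full}: evaluate the derivatives at covariate vectors tailored to a target preference value, so that a single cutoff crossing carries the density we want to recover.

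Fix $\Epar\in[\Epar^*,\Epar^{**}]$. Using Assumption \ref{ass:var_x} I select $\covx'\in\B_\epsilon(\Covx^1(\Epar))$ at which $\EEcut^{1,1}_{2,1}(\covx^{\cl\prime})=\Epar$ while the remaining type-$1$ cutoffs among bundles indexed by $\{1,2\}$ separate into $\Epar=\EEcut^{1,1}_{2,1}(\covx^{\cl\prime})<\EEcut^{1,1}_{2,2}(\covx')<\EEcut^{1,1}_{1,2}(\covx^{\cm\prime})$, with the type-$0$ minimal crossings ordered oppositely, $\YYcut^{1,1}_{2,1}(\covx^{\cl\prime})>\YYcut^{1,1}_{1,2}(\covx^{\cm\prime})$, exactly as in the full-consideration discussion. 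Assumption \ref{ass:not_dominated} orders the context-$\cl$ cutoffs $\EEcut^{1,1}_{2,1}<\EEcut^{1,1}_{3,1}<\cdots$, and Assumption \ref{ass:distinct_contexts}-\ref{distinct_sufficient}, in particular Eq.~\eqref{eq:distinct3}, guarantees that no cutoff involving a bundle with an index $\ge 3$ equals $\Epar$ at $\covx'$, while Eq.~\eqref{eq:distinct2} makes the type-$0$ cutoffs at $\covx'$ mutually distinct. Consequently, among all summands of the $\covx^\cl$-derivative in Eq.~\eqref{Derivate_ij}, exactly one type-$1$ term, the $(k,r)=(2,1)$ term, evaluates $f$ at the target $\Epar$, with coefficient $\alpha\,\cOE(\{\cI_{1,1},\cI_{2,1}\};\emptyset)\,\partial\EEcut^{1,1}_{2,1}(\covx^{\cl\prime})/\partial\covx^\cl$, where the factor $\partial\EEcut^{1,1}_{2,1}/\partial\covx^\cl$ is a known, nonzero function of $\covx^\cl$ by Assumption \ref{ass:SCP}.

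The crux, and what I expect to be the main obstacle, is that under limited consideration the weights $\cOE(\cdot;\cdot)$ and $\cOY(\cdot;\cdot)$ attached to the non-minimal type-$1$ crossings and to the entire type-$0$ block are strictly positive, whereas in the full-consideration case they all vanish. The $\covx^\cl$-derivative at $\covx'$ therefore equals the desired $\alpha\,\cOE(\{\cI_{1,1},\cI_{2,1}\};\emptyset)\,f(\Epar)\,\partial\EEcut^{1,1}_{2,1}/\partial\covx^\cl$ plus contamination from $f$ at arguments above $\Epar$ and from $g$ at the type-$0$ cutoffs. I would remove this contamination by combining the $\covx^\cl$- and $\covx^\cm$-derivatives at $\covx'$: since the minimal type-$1$ context-$\cm$ cutoff $\EEcut^{1,1}_{1,2}$ exceeds $\Epar$, the target $f(\Epar)$ is absent from $\partial/\partial\covx^\cm$, and since type $0$'s minimal crossing at $\covx'$ lies in context $\cm$ while type $1$'s lies in context $\cl$, a linear combination of the two derivatives with coefficients fixed by the known cutoff slopes annihilates the leading type-$0$ contribution. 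Assumption \ref{ass:MIC} is exactly what disposes of the residual consideration weights: under branch \ref{MIC1} the symmetry $\cOE(\{\cI_{1,1},\cI_{2,2},\cI_{2,1}\};\emptyset)=\cOE(\{\cI_{1,1},\cI_{2,2},\cI_{1,2}\};\emptyset)>0$ makes the weights multiplying the $\cI_{2,2}$ crossing cancel across the two derivatives, while under branch \ref{MIC2} the equality $\cOE(\{\cI_{1,1},\cI_{2,1}\};\emptyset)=\cOE(\{\cI_{1,1},\cI_{2,1}\};\{\cI_{2,2},\cI_{1,2}\})$ serves the same purpose, and the analogous statements with $\cOY$ handle type $0$. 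What survives is $\alpha\,\cOE(\{\cI_{1,1},\cI_{2,1}\};\emptyset)\,f(\Epar)$, i.e.\ $f$ up to the single unknown scale $\alpha\,\cOE(\{\cI_{1,1},\cI_{2,1}\};\emptyset)$, proving statement 1. Statement 2 is identical after exchanging the two contexts and replacing $(\Covx^1,F,f,\cOE)$ by $(\Covx^0,G,g,\cOY)$ and working at $\covx'\in\B_\epsilon(\Covx^0(\Ypar))$.

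Finally, statement 3. When $[\Epar^*,\Epar^{**}]=[0,\bar\Epar]$ the function $\Epar\mapsto\alpha\,\cOE(\{\cI_{1,1},\cI_{2,1}\};\emptyset)\,f(\Epar)$ is known on the entire support, so integrating it over $[0,\bar\Epar]$ and using $\int_0^{\bar\Epar}f=1$ (Assumption \ref{ass:coeff_restrict}) returns the scale $\alpha\,\cOE(\{\cI_{1,1},\cI_{2,1}\};\emptyset)$; dividing recovers $f$ itself, and the same normalization over $[0,\bar\Ypar]$ recovers $g$. This is the only step that uses the full-support hypothesis, and it is what upgrades the up-to-scale conclusions of statements 1--2 to point identification of the two densities.
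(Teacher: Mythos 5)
Your overall skeleton (work near $\Covx^1(\Epar)$, isolate the $(2,1)$ crossing, normalize by $\int f=1$ for part 3) matches the paper's, but the core mechanism you propose --- taking a single linear combination of $\partial\Pr(\cI^*=\cI_{1,1}|\covx')/\partial\covx^{\cl}$ and $\partial\Pr(\cI^*=\cI_{1,1}|\covx')/\partial\covx^{\cm}$ at one point $\covx'$ --- does not work, and this is a genuine gap. At $\covx'$ the $\covx^{\cl}$-derivative in Eq.~\eqref{Derivate_ij} contains, besides the target term $\alpha\,\cOE(\{\cI_{1,1},\cI_{2,1}\};\emptyset)f(\Epar)\,\partial\EEcut^{1,1}_{2,1}/\partial\covx^{\cl}$, contamination in which $f$ and $g$ are evaluated at \emph{many distinct unknown points} (every cutoff $\EEcut^{1,1}_{k,r}$, $\YYcut^{1,1}_{k,r}$ that depends on $\covx^{\cl}$), each carrying its own unknown consideration weight. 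One linear combination of two known numbers cannot eliminate this collection of unknowns. Concretely: (i) the term $(1-\alpha)\,\cOY(\{\cI_{1,1},\cI_{2,1}\};\{\cI_{1,2},\cI_{2,2}\})\,g(\YYcut^{1,1}_{2,1})\,\partial\YYcut^{1,1}_{2,1}/\partial\covx^{\cl}$ appears \emph{only} in the $\covx^{\cl}$-derivative, since $\YYcut^{1,1}_{2,1}$ does not depend on $\covx^{\cm}$, so no multiple of the $\covx^{\cm}$-derivative can touch it; (ii) using a nonzero multiple $\lambda$ of the $\covx^{\cm}$-derivative imports fresh unknowns $f(\EEcut^{1,1}_{1,2})$ and $g(\YYcut^{1,1}_{1,2})$ that appear only there; (iii) cancelling the common $f(\EEcut^{1,1}_{2,2})$ term forces $\lambda=\frac{\partial\EEcut^{1,1}_{2,2}/\partial\covx^{\cl}}{\partial\EEcut^{1,1}_{2,2}/\partial\covx^{\cm}}$, but the $g(\YYcut^{1,1}_{2,2})$ term and every term from bundles outside the $2\times 2$ block would require different, generically incompatible values of $\lambda$. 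Relatedly, you misattribute the role of Assumption \ref{ass:MIC}: it equates certain consideration weights (e.g., $\cOE(\{\cI_{1,1},\cI_{2,2}\};\cI_{2,1})=\cOE(\{\cI_{1,1},\cI_{2,2}\};\cI_{1,2})$ under \ref{MIC1}), which does not make contamination proportional across your two derivatives; in the paper it serves to guarantee that a certain \emph{jump} coefficient is nonzero.

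The paper's actual argument uses a different device: it evaluates the \emph{same} derivative $\partial\Pr(\cI^*=\cI_{1,1}|\cdot)/\partial\covx^{\cl}$ at two points $\covx'$ and $\covx''$ on \emph{opposite sides} of the locus $\Covx^1(\Epar)$ --- where the ordering of $\EEcut^{1,1}_{2,1}$ versus $\EEcut^{1,1}_{1,2}$ flips --- and takes the limit of the difference as both points converge to $\covx\in\Covx^1(\Epar)$. Crossing the locus changes the composition of the dominating sets $\cB(\cI_{1,1},\covx;\cdot)$, and hence the consideration weights, only for the cutoffs among $\{\cI_{2,1},\cI_{1,2},\cI_{2,2}\}$; by Assumption \ref{ass:distinct_contexts}-\ref{distinct_sufficient} every other term (all type-$0$ terms and all type-$1$ terms involving other bundles) is continuous across the locus and cancels in the limit, regardless of how many unknowns it contains. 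Moreover, on the locus the agent is indifferent among all four bundles $\cI_{1,1},\cI_{2,1},\cI_{1,2},\cI_{2,2}$, so both surviving cutoffs converge to $\Epar$ and the limit equals $\alpha f(\Epar)\cSE(\covx,\cOE)$, with $\cSE$ as in Eq.~\eqref{eq:S_Epar}; Assumption \ref{ass:MIC} then ensures $\cSE\neq 0$ via Eqs.~\eqref{eq:S_MIC1}--\eqref{eq:S_MIC2}. This two-point differencing across the discontinuity is the idea your proposal is missing, and without it (or the alternative slope conditions the authors allude to in a footnote) the single-point contamination cannot be removed.
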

Theorem \ref{T1alt} shows that under limited consideration, despite the lack of independent variation in observed covariates across alternatives (within a single context), it is nonetheless possible to identify the distribution of the random coefficient for each preference type without relying on identification at infinity arguments.\footnote{If one had variation in $\covx^j$ across alternatives and unbounded support, letting the observed covariate (say, price) for a given alternative go to infinity would be akin to assuming that one observes agents repeated choices in context $j$ while facing feasible sets that include/exclude each single alternative.}
While to pin down the entire distribution of preferences large support is required, our approach identifies (up to scale) the density function of each random coefficient conditional on a given interval.
Let $\bar{V}$ (respectively, $\bar{W}$) denote the union of all intervals $[\Epar^*,\Epar^{**}]$ (respectively, $[\Ypar^*,\Ypar^{**}]$) satisfying Assumption \ref{ass:var_x}.
If $\bar{V}$ is a proper subset of $[0,\bar\Epar]$ (respectively, $\bar{W}$ is a proper subset of $[0,\bar\Ypar]$), partial identification of the entire distribution of preferences is still possible, by collecting the probability distribution functions that have density equal to $f(\Epar)$ for all $\Epar\in\bar{V}$ (respectively, $g(\Ypar)$ for all $\Ypar\in\bar{W}$).
For a general treatment of partial identification of preferences in discrete choice models with limited consideration, see \cite{BCMT21}.

One can point identify the shares of preference types under a mild additional restriction, where the probability of including one specific pair of bundles in the consideration set and excluding another specific bundle (or pair of bundles) is independent of preference type.
\begin{corollary}\label{cor:T1}
	$\alpha$ is identified if all Assumptions of Theorem \ref{T1alt} hold, and either:
	\begin{enumerate}
	\item[(i)] Assumption \ref{ass:MIC}-\ref{MIC1} holds for both agents with preference types $t_i=1$ and $t_i=0$, and
	\[\cOE(\{\cI_{1,1},\cI_{2,1}\};\emptyset)-\cOE(\{\cI_{1,1},\cI_{2,1}\}; \{\cI_{2,2},\cI_{1,2}\})=\cOY(\{\cI_{1,1},\cI_{2,1}\};\emptyset)-\cOY(\{\cI_{1,1},\cI_{2,1}\}; \{\cI_{2,2},\cI_{1,2}\}).\]
	\item[(ii)] Assumption \ref{ass:MIC}-\ref{MIC2} holds for both agents with preference types $t_i=1$ and $t_i=0$, and
	\[\cOE(\{\cI_{1,1},\cI_{2,2}\}; \cI_{2,1})-\cOE(\{\cI_{1,1},\cI_{2,2}\}; \cI_{1,2})=
      \cOY(\{\cI_{1,1},\cI_{2,2}\}; \cI_{2,1})-\cOY(\{\cI_{1,1},\cI_{2,2}\}; \cI_{1,2}).\]
\end{enumerate}	
\end{corollary}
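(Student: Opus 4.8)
The plan is to read off from the proof of Theorem~\ref{T1alt} the fact that it does not identify $f$ and $g$ in isolation, but rather the \emph{weighted densities} $\alpha\,\kappa_1\,f(\Epar)$ and $(1-\alpha)\,\kappa_0\,g(\Ypar)$, where $\kappa_1$ (resp.\ $\kappa_0$) is a consideration functional carried along as an unknown multiplicative scale --- this is precisely the sense in which $f$ and $g$ are pinned down only up to scale. Concretely, evaluating Eq.~\eqref{Derivate_ij} with $(\ell,q)=(1,1)$ along the curve $\Covx^1(\Epar)$ and dividing by the known nonzero derivative $\partial\EEcut^{1,1}_{2,1}/\partial\covx^\cl$ isolates $\alpha\,\kappa_1\,f(\Epar)$, and symmetrically for $(1-\alpha)\,\kappa_0\,g(\Ypar)$ along $\Covx^0(\Ypar)$. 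First I would record these scales as numbers: working in the regime of part~3 of Theorem~\ref{T1alt} (so that $\bar V=[0,\bar\Epar]$ and $\bar W=[0,\bar\Ypar]$ and the weighted densities are available on the entire supports), integrating and using $\int f=\int g=1$ identifies the two scalars $\alpha\kappa_1$ and $(1-\alpha)\kappa_0$. After this step $\alpha$ remains entangled with the unknown constants $\kappa_1,\kappa_0$.

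The role of the extra conditions in Corollary~\ref{cor:T1} is exactly to impose ``$\kappa_1=\kappa_0$''. Under part~(i), Assumption~\ref{ass:MIC}-\ref{MIC1} is in force and the proof of Theorem~\ref{T1alt} shows the scale multiplying $f$ to be $\kappa_1=\cOE(\{\cI_{1,1},\cI_{2,1}\};\emptyset)-\cOE(\{\cI_{1,1},\cI_{2,1}\};\{\cI_{2,2},\cI_{1,2}\})$ --- the probability that a type-$1$ agent co-considers $\{\cI_{1,1},\cI_{2,1}\}$ together with at least one of $\{\cI_{2,2},\cI_{1,2}\}$ --- with the scale multiplying $g$ the same functional for type $0$ ($\cOY$ in place of $\cOE$); since $\kappa_t\ge\cOE(\{\cI_{1,1},\cI_{2,2},\cI_{2,1}\};\emptyset)>0$ by Assumption~\ref{ass:MIC}-\ref{MIC1}, both scales are strictly positive. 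Under part~(ii), Assumption~\ref{ass:MIC}-\ref{MIC2} forces that case-(i) functional to vanish, and the surviving scale is instead $\kappa_1=\cOE(\{\cI_{1,1},\cI_{2,2}\};\cI_{2,1})-\cOE(\{\cI_{1,1},\cI_{2,2}\};\cI_{1,2})$ (and its $\cOY$ analog for $g$), which Assumption~\ref{ass:MIC}-\ref{MIC2} guarantees to be nonzero. In both parts, the displayed equality of the corollary is literally $\kappa_1=\kappa_0=:\kappa$.

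Granting $\kappa_1=\kappa_0=\kappa\neq0$, the conclusion is immediate: from the two scalars identified in the first step,
\[
\frac{\alpha\kappa_1}{\alpha\kappa_1+(1-\alpha)\kappa_0}=\frac{\alpha\kappa}{\alpha\kappa+(1-\alpha)\kappa}=\alpha,
\]
so $\alpha$ is identified (the common sign of $\kappa$ cancels, so non-vanishing suffices). The main obstacle is the first step --- verifying that the consideration functional carried as the scale in Theorem~\ref{T1alt} is \emph{exactly} the one in the corresponding condition of Corollary~\ref{cor:T1}, and that it enters the $f$- and $g$-channels through the matching type-specific functionals. The delicacy is that differentiating Eq.~\eqref{eq:choice_prob} at $(1,1)$ produces, besides the target term at the cutoff $\EEcut^{1,1}_{2,1}=\Epar$, contributions from the neighboring cutoffs $\EEcut^{1,1}_{2,2}$ and $\EEcut^{1,1}_{1,2}$ (which collapse onto $\Epar$ along $\Covx^1(\Epar)$) and from cutoffs involving bundles with an index $\ge 3$. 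One must use Assumption~\ref{ass:distinct_contexts} to separate the latter from $\Epar$ and Assumption~\ref{ass:MIC} to combine the contributions of the three neighboring cutoffs, so that the coefficient of $f(\Epar)$ collapses to the single scalar $\kappa_1$; this bookkeeping of the dominating sets $\cB(\cI_{1,1},\covx;\cdot)$ and the associated co-consideration events is where the two cases of Assumption~\ref{ass:MIC} genuinely diverge.
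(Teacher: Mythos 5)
Your proposal is correct and follows essentially the same route as the paper's own proof: you use the fact that Theorem \ref{T1alt}'s argument identifies $\alpha f(\Epar)\cdot \cSE$ and $(1-\alpha)g(\Ypar)\cdot \cSY$, note that under Assumption \ref{ass:MIC} each of $\cSE,\cSY$ factors into a known cutoff derivative times a consideration scalar (Eqs.~\eqref{eq:S_MIC1}--\eqref{eq:S_MIC2}), observe that the corollary's displayed equalities say exactly that these type-specific scalars coincide ($\kappa_1=\kappa_0\neq 0$), and recover $\alpha$ from the two identified products after integrating the densities to one under large support. Your additional verifications (positivity of $\kappa_t$ under part (i), and that MIC2 kills the part-(i) functional while its own scalar is nonzero) are correct refinements of what the paper states more tersely.
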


Given the distributions of the random coefficients, $F(\cdot)$ and $G(\cdot)$, the system of equations defined in Eq.~\eqref{eq:choice_prob} ($L\times M$ equations for a given $\covx$) is linear in the consideration probabilities across the two types, weighted by their respective shares $\alpha$ and $1-\alpha$. 
This in turn implies that we have a continuum of $L\times M$ linear equations to pin down $2^{L\times M+1}$ parameters. 
In general, with sufficient variation in $\covx$, these parameters are over-identified, subject to standard non-redundancy assumptions.\footnote{For example, if for type $t_i=1$ alternative $\cI_{\ell,k}$ dominates alternative $\cI_{q,r}$, $\mathcal{Q}_1(\{\cI_{\ell,k},\cI_{q,r}\})$ cannot be separately identified from $\mathcal{Q}_1(\{\cI_{\ell,k}\})$.}
However, depending on the specific models of preferences assumed, and on the richness of variation in the data observed, it may not be possible to identify some parts of the distribution of consideration sets.
Nevertheless, for a specific model, given the data, one can test whether a full rank system of equations results across observed values of $\covx$ \citep[see., e.g.,][]{chen:fang19}.

More broadly, our limited consideration model has several testable implications. 
We highlight two: one specific to our broad consideration case, the other more general.
First, suppose Assumption \ref{ass:narrow} holds. 
Then under full or narrow consideration, the marginal distribution of choices in context $\cl$ is invariant to changes in $\covx^\cm$ and vice versa. 
Under broad consideration this is not the case, as can be seen through a simple example where $\mathrm{card}(\cD^j)=2$ for both $j=\cl$ and $j=\cm$, and a positive share of agents consider only the two bundles $\{\cI_{1,1}, \cI_{2,2}\}$.
Hence, one can test for violations of a narrow consideration model by checking whether the marginal distribution of choices in context $\cl$ (respectively, $\cm$) responds to changes in $\covx^\cm$ (respectively, $\covx^\cl$).
A second testable implication of the model is obtained as follows.
Recall that our identification argument focuses on the cheapest bundle, $\cI_{1,1}$, and is built by looking at how its share responds to changes in $\covx^\cl$ and $\covx^\cm$. 
An identical argument can be constructed by focusing on the most expensive bundle, $\cI_{M^\cl,M^\cm}$. 
Hence, the density functions $f(\Epar)$ and $g(\Ypar)$ can be recovered through two different channels. 
If they do not coincide, this implies that at least one modeling assumption is violated.

We conclude by comparing the amount of variation in $\covx=(\covx^\cl,\covx^\cm)$ that we require for our point identification results, with that required in the closely related prior work of \cite{BaMoTh21} to obtain semi-nonparametric point identification of a model with a single preference type.
\citeauthor{BaMoTh21}'s results are derived for an environment where agents are observed making choices only in a single context and with a single source of independent data variation, say context $\cl$ with variation in $\covx^\cl$.
The covariate $\covx^\cl$ is assumed to vary independently across agents; however, for a given agent there is no requirement of independent variation in $\covx^\cl$ across alternatives in $\cD^\cl$ (similarly to this paper). 
Due to the less rich choice environment observed, to recover the conditional distribution of preferences, \cite{BaMoTh21} impose stronger restrictions than we do here on the consideration set formation mechanism.\footnote{For example, \cite{BaMoTh21} require that whenever $\ell^{1\cl}$ is considered, $\ell^{2\cl}$ is also considered.
They do so because there is not a one-to-one mapping between $\partial\Pr(\cI^*=\cI_{1}|\covx)/\partial \covx^{\cl}$ and the (up-to-scale) density function evaluated at a single point. 
Rather, $\partial\Pr(\cI^*=\cI_{1}|\covx)/\partial \covx^{\cl}$ maps into a linear combination of the density function evaluated at cutoffs $\EEcut^{1}_{k}(\covx^{\cl}), k>1$.
In contrast, here by properly utilizing variation in $\covx^{\cm}$ we are able to create such a mapping even though there can be multiple preference types.}
	 	  
\section{Model \& Data on Choices in Automobile Insurance}
\label{sec:empirical}

\subsection{Empirical model}
\label{Model}

As introduced in Section \ref{subsection:lotteries}, we model agents' choices in two contexts of insurance coverage, where each coverage provides full insurance against covered losses in excess of a deductible chosen by the agent.
In our data, the decision maker is a household; hence, we refer to agents as households.
As a reminder, $\mu_i^j$ denotes the probability of household $i$ experiencing a claim in context $j$; for each coverage $j\in \{\cl,\cm\}$, household $i$ faces a menu of premium-deductible pairs, $\mathcal{M}_i^j\equiv\{(\dor^{\ell j},\covx_i^{\ell j}):\ell\in\cD^j\}$, where $\covx_i^{\ell j}$ is the household-specific premium associated with deductible $\dor^{\ell j}$ and $\cD^j$ is the set of deductible options offered in context $j$.
As discussed in Section \ref{subsection:lotteries}, for each context $j\in \{\cl,\cm\}$ the ratio of the price of deductible $\dor^{\ell j}$ to the price of deductible $\dor^{k j}$ is constant across households for all $\dor^{\ell j},\dor^{k j}\in\cD^j$.

We make assumptions, that are widespread in the literature on property insurance, related to filing claims and their probabilities:
\begin{assumption}[Restrictions Related to Claim Probabilities]
\label{ass:claim}
{\color{white}line}
\begin{enumerate}[label=(\Roman*)]
\item \label{ass:one_claim} Households disregard the possibility of experiencing more than one claim during the policy period.
\item \label{ass:claim_props} Any claim exceeds the highest available deductible; payment of the deductible is the only cost associated with a claim; the household's deductible choice does not influence its claim probability.
\end{enumerate}
\end{assumption}

We assume that the two types of preferences described in Section \ref{subsection:types} result from either Expected Utility Theory (EU) or Yaari's \citeyearpar{Yaari1987} Dual Theory (DT).
Within EU, a single-context lottery is evaluated through
\begin{align}
U_i(\cL(\dor^{\ell j},\covx_i^{\ell j},\mu_i^j))\equiv(1-\mu_i^j)u_{i}(w_{i}-\covx_i^{\ell j})+\mu_i^ju_{i}(w_{i}-\covx_i^{\ell j}-\dor^{\ell j}),\label{eq:EUT}
\end{align}
where $w_{i}$ is the household's wealth and $u_{i}(\cdot)$ is its Bernoulli utility function, which under Assumption \ref{ass:stability} is the same for each context. 
In the EU model, utility is linear in the probabilities and aversion to risk is driven by the shape of the utility function $u_{i}(\cdot)$.

Yaari's \citeyearpar{Yaari1987} DT model aims at decoupling the decision maker's attitude towards risk from her attitude towards wealth. 
Within DT, a single-context lottery is evaluated through
\begin{align}
U_i(\cL(\dor^{\ell j},\covx_i^{\ell j},\mu_i^j))\equiv(1-\Omega_{i}(\mu_i^j))(w_{i}-\covx_i^{\ell j})+\Omega_{i}(\mu_i^j)(w_{i}-\covx_i^{\ell j}-\dor^{\ell j}),\label{eq:Yaari}
\end{align}
where $\Omega_{i}(\cdot)$ is the household's probability distortion function, which under Assumption \ref{ass:stability} is the same for each context. 
In the DT model, utility is linear in the outcomes and aversion to risk is driven by the shape of the probability distortion function $\Omega_{i}(\cdot)$.\footnote{Probability distortions are featured also in, e.g., prospect theory \citep{Kahneman1979,Tversky1992}, rank-dependent expected utility theory \citep{Quiggin1982}, \citet{Gul1991} disappointment aversion theory, and \citet{Koszegi2006,Koszegi2007} reference-dependent utility theory.} 
We remark that in our setting (as well as in many others where subjective beliefs data are not collected and the analysis relies on an often implicit rational expectations assumption), the DT model is indistinguishable from one in which agents' subjective loss probabilities systematically deviate through the $\Omega_{i}(\cdot)$ function from the objective ones.

To strike a balance between model generality and its empirical tractability, we impose shape restrictions on $u_{i}(\cdot)$ and $\Omega_{i}(\cdot)$, respectively. 
We assume $u_{i}(\cdot)$ exhibits constant absolute risk aversion (CARA):
\begin{assumption}[CARA]
\label{ass:cara}
%The function $u_i$ exhibits constant absolute risk aversion, i.e., 
$u_i(y)=\frac{1-\exp(-\Epar_i y)}{\Epar_i}$ for $\Epar_i\neq 0$ and $u_i(y)=y$ for $\Epar_i=0$. 
\end{assumption}
Assuming CARA has two key virtues. 
First, $u_{i}(\cdot)$ is fully characterized by a single parameter: the Arrow-Pratt coefficient of absolute risk aversion, $\Epar_{i}\equiv-u_{i}^{\prime\prime}(w_{i})/u_{i}^{\prime}(w_{i})$. 
Second, $\Epar_{i}$ is a constant function of $w_{i}$, and hence we need not observe wealth to estimate $u_{i}(\cdot)$.

To keep the EU model and the DT model on ``equal footing," we need $\Omega_{i}(\cdot)$ to be as parsimonious as $u_{i}(\cdot)$. 
This suggests a single-parameter specification. 
The literature contains many examples, and we run our analysis with the following one due to \citet{Prelec1998}:
\begin{assumption}[\citeauthor{Prelec1998}'s $\Omega(\cdot)$ function]
\label{ass:Omega}
%The function $\Omega_i$ is given by 
$\Omega_i(\mu)=\exp(-(-\ln\mu)^{\Ypar_i})$, $\Ypar_i>0$.
\end{assumption}
We also carry out our analysis using other utility functions for the EU type (one proposed by \citet{Cohen2007} and one by \citet{Barseghyan2013}) and other probability distortion functions for the DT type (one put forward by \citet{Tversky1992} and one  by \citet{BMT16}). 
The results confirm the main takeaways reported here, and are available from the authors upon request.\footnote{Vuong tests comparing the various models confirm the good fit of our preferred specification.}
%The exact functional forms used and the associated results (as well as Vuong tests comparing the various models) are reported in Online Appendix \ref{sec:other_functions}.

The EU and DT models are true alternative theories of decision making under risk.\footnote{Except when both degenerate into net present value calculations with $\Epar_i=0$ and $\Ypar_i=1$.} 
Neither model is a special case of the other. 
DT preferences depart from EU preferences in two key ways. 
First, risk averse behavior is driven by distortions of probabilities for households with DT preferences, but by nonlinear evaluation of wealth for households with EU preferences. 
Second, narrow bracketing has behavioral implications for households with DT preferences, but not for households with EU preferences. 
In our framework, where the lotteries are independent across the brackets,\footnote{Independence results from the assumption that claims follow a Poisson distribution, which is imposed in estimating the probability of a claim \citep[see][]{Barseghyan2013,BTX18}.} the choices of a household with EU preferences and CARA utility are independent of the scope of bracketing \citep[e.g.,][]{Rabin2009}. 
The well-known reason is the absence of wealth effects with CARA utility. 
In contrast, the choices of a household with DT preferences are not independent of the scope of bracketing, because of the rank-dependent nature of how probability distortions are applied.

Within context $j$, the resulting utility function is
\begin{equation}
U_i(\cL(\dor^{\ell j},\covx_i^{\ell j},\mu_i^j)) = \left\{ \begin{tabular}{ll}
$(1-\mu_i^j) u_i(w_i-\covx_i^{\ell j})+\mu_i^j u_i(w_i-\covx_i^{\ell j}-\dor^{\ell j})$ &if $t_i = 1$~~(EU), \\ 
$( 1-\Omega_i(\mu_i^j)) (w_i-\covx_i^{\ell j})+\Omega_i(\mu_i^j)(w_i-\covx_i^{\ell j}-\dor^{\ell j})$ & if $t_i=0$~~(DT).\\ 
\end{tabular}  \right.\label{eq:U_final}
\end{equation}

While we obtain conditions for nonparametric point identification of $F(\cdot)$ and $G(\cdot)$, for tractability we estimate a fully parametric model via Maximum Likelihood.\footnote{Inspection of Eqs.~\eqref{eq:S_Epar}-\eqref{eq:S_MIC1}-\eqref{eq:S_MIC2} in the Appendix shows that under Assumption \ref{ass:het_restr}, $f(\cdot)$ and $g(\cdot)$ are identified, provided the intervals $[\Epar^*,\Epar^{**}]$ and $[\Ypar^*,\Ypar^{**}]$ in Assumption \ref{ass:var_x} are not singletons.}
\begin{assumption}[Heterogeneity Restrictions]
\label{ass:het_restr}
{\color{white}a}
\begin{enumerate}[label=(\Roman*),topsep=1ex,itemsep=0pt]
\item \label{ass:beta_r} Conditional on $t_i=1$, $\Epar_i$ follows a Beta distribution on $[0,0.025]$ with parameter vector $(\gamma_{\Epar 1},\gamma_{\Epar 2})$ and is independent of $[(\mu_i^j,\covx_i^j),j=\cl,\cm]$. 
\item \label{ass:beta_omega} Conditional on $t_i=0$, $\Ypar_i$ follows a Beta distribution on $[0,1]$ with parameter vector $(\gamma_{\Ypar 1},\gamma_{\Ypar 2})$ and is independent of $[(\mu_i^j,\covx_i^j),j=\cl,\cm]$. 
\end{enumerate}
\end{assumption}
Assumption \ref{ass:het_restr} specifies that the distributions of $\Epar$ and $\Ypar$ are Beta distributions. The main attraction of the Beta distribution is its flexibility \citep[e.g.,][]{Ghosal2001}. Its bounded support is a plus given our setting. A lower bound of zero rules out risk-loving preferences and seems appropriate for insurance markets that exist primarily because of risk aversion. 
Imposing an upper bound enables us to rule out absurd levels of risk aversion.
The choice of 0.025 for CARA %and .075 for NTD 
is conservative both as a theoretical matter and in light of prior empirical estimates in similar settings \citep[e.g.,][]{Cohen2007,Sydnor2010,Barseghyan2011,Barseghyan2013,BMT16}. 
Similarly, for the probability distortion function, the upper bound of 1 insures over-weighting of probabilities; the lower bound of 0 insures that it is a well-behaved function. 
None of these constraints is binding in our analysis. 

We close the empirical model by restricting how $C_i\subseteq\cD=\cD^\cl\times\cD^\cm$ is drawn:
\begin{assumption}[(Broad) Alternative-Specific Consideration]
\label{ass:ARC}
Household $i$ draws a \emph{consideration set} $C_i\subseteq\cD$ s.t.
\begin{align*}
\Pr(C_i = G) = \prod_{\cI\in G}\arc_\cI \prod_{\tilde{\cI}\notin G} (1-\arc_{\tilde{\cI}}),~~\forall G\subseteq\cD,
\end{align*}
where $\arc_\cI\equiv\Pr(\cI\in C_i)=\Pr(\cI\in C_i|t_i)\ge0,~\cI\in\cD$, and $\arc_{\cI_{1,1}}=1$.
\end{assumption}
Assumption \ref{ass:ARC} strengthens Assumption \ref{ass:consideration} by requiring consideration to be independent of type (in addition to being independent of households' preferences given type).
This is \emph{not} needed to establish identification, but we think it prudent to impose it in our application because, as further discussed below, $|\cD|=30$ and allowing for type-dependent consideration would add 60 rather than 30 consideration parameters to the model. 
Assumption \ref{ass:ARC} also adapts the Alternative-specific Random Consideration (ARC) model first proposed by \citet{Manski1977} and later axiomatized by \citet{man:mar14}, to hold over bundles of insurance deductibles across contexts.
Each bundle $\cI\in\cD$ appears in the consideration set with probability $\arc_\cI$ independently of other bundles. 
To avoid empty consideration sets, following \citet{Manski1977}, we assume that one bundle is always considered, and further impose that the always-considered bundle is the cheapest one.\footnote{Alternatively, we could assume that if the realized consideration set is empty, agents choose one of the alternatives in $\cD$ uniformly at random. Our estimation results are robust to this modeling assumption.}
Once the consideration set is drawn, the household chooses the best alternative according
to its preferences as in Eq.~\eqref{eq:max_CE}. 

\subsection{Data Description}

\label{sec:data}
We obtained the data from a large U.S. property and casualty insurance company. 
The company offers several lines of insurance, including auto. 
As explained in Section \ref{subsection:lotteries}, we focus on deductible choices in auto collision and auto comprehensive.
Our analysis uses a sample of 7,736 households who purchased their auto and home policies for the first time between 2003 and 2007 and within six months of each other (this is the same sample used by \citet{BaMoTh21}).\footnote{As explained in \citet{BaMoTh21}, the dataset is an updated version of the one used in \citet{Barseghyan2013}. It contains information for an additional year of data and puts stricter restrictions on the timing of purchases across different lines. These restrictions are meant to minimize potential biases stemming from non-active choices, such as policy renewals, and temporal changes in socioeconomic conditions.}
We observe households' deductible choices in auto collision and auto comprehensive, and the premiums they paid for these coverages. 
We also observe the household-coverage specific menus of deductible-premium combinations---i.e., the pricing menus---that were available to the households when they made their deductible choices.

We refer to Section \ref{subsection:lotteries} for a discussion of how households' pricing menus are determined by the company in each context.
As explained there, in each context the premium $\covx_i^{\ell j}$ associated to deductible $\dor^\ell,\ell\in\cD^j$, is a household-invariant affine function of a household-specific base price $\covx_i^j$, and the company determines this base price applying a coverage-specific rating function to household $i$'s coverage-relevant characteristics. 
Naturally, the base prices $\covx_i^\cl$ and $\covx_i^\cm$ may exhibit substantial correlation due to common factors entering the rating function (this correlation equals 0.74 in our data), highlighting the importance of our weak requirement on variation in $\covx$ stated in Assumption \ref{ass:var_x} -- which in particular can hold when $\covx^\cl$ and $\covx^\cm$ are strongly correlated (see Figure \ref{Fig:boldX} and its discussion).

\begin{table}\caption{Collision and Comprehensive Deductible Choices, in \% }\label{Table_jointD}
\begin{tabularx}{1\textwidth}{c *{7}{Y}}%{clcccccc}      
\\ 
\hline
\textbf{}     & \multicolumn{6}{c}{\textbf{Comprehensive}}                                                                                                                             \\
\multicolumn{1}{l}{\textbf{Collision}}   & \multicolumn{1}{c}{\$50} & \multicolumn{1}{c}{\$100} & \multicolumn{1}{c}{\$200} & \multicolumn{1}{c}{\$250} & \multicolumn{1}{c}{\$500} & \multicolumn{1}{c}{\$1,000} \\ \hline
\$100                 & 0.7                      & 0.2                       & 0                         & 0                         & 0                         & 0                           \\
\$200                 & 1.8                      & 1.1                       & 10                      & 0                         & 0.1                         & 0                           \\
\$250                 & 0.9                      & 1.3                       & 4.6                       & 5.4                       & 0                         & 0                           \\
\$500                 & 1.0                      & 1.3                       & 17.8                      & 6.5                       & 41                      & 0                           \\
\$1,000               & 0                        & 0.1                         & 0.4                       & 0.2                       & 1.9                       & 3.7        \\ \hline               
\end{tabularx}
\end{table}

Table \ref{Table_jointD} reports the deductible choices of the households in our sample. 
In each context, the modal choice is \$500. 
Interestingly, virtually no household purchases a comprehensive deductible larger than their collision deductible.
As we discuss in more detail below, this choice pattern cannot be rationalized by standard discrete choice models under the assumption of full consideration, but can easily be explained once one allows for limited consideration.

The top panel of Table \ref{Table_quantiles} shows that base premiums vary dramatically in our sample. 
The ninety-ninth percentile of the \$500 deductible is more than ten times the corresponding first percentile in each line of coverage.
While not reported in the table, here we summarize the pricing menus.
The cost of decreasing the deductible from \$500 to \$250 is on average \$56 in collision and \$31 in comprehensive.
The saving from increasing the deductible from \$500 to \$1,000 is on average \$42 in collision and \$23 in comprehensive.

The claim probabilities $\mu_i^j$ stem from \citet{BTX18}, who estimated them using coverage-by-coverage Poisson-Gamma Bayesian credibility models applied to a large auxiliary panel of more than one million observations. 
We treat estimated claim probabilities as if they were observed data. %, since they were estimated on a substantially larger dataset.
Predicted claim probabilities (summarized in the bottom panel of Table \ref{Table_quantiles}) exhibit substantial variation: the ninety-ninth percentile claim probability in collision (comprehensive) is 4.3 (12) times higher than the corresponding first percentile. 
Finally, the correlation between claim probabilities and premiums for the \$500 deductible is 0.38 for collision and 0.15 for comprehensive. 
Hence, there is independent variation in both (although our identification results only require independent variation in premiums).

\begin{table}\centering
\caption{Descriptive statistics for premiums of \$500 deductible and claim probabilities}
\label{Table_quantiles}
	\centering
\resizebox{\textwidth}{!}{\begin{tabular}{l l c c c c c c c c c }
		\hline \hline \\[-1.5ex]	
	&	      & \textbf{Mean} & \textbf{Std.} & \multicolumn{7}{c}{\textbf{Qunatiles}}        \\                                                                                                                    	&      &                             & \textbf{Dev.} & & & & & & &\\
	&	      &  & & 0.01   & 0.05  & 0.25 & 0.50 & 0.75 & 0.95 & 0.99\\\\[-1.5ex]
		\hline \\[-1.5ex]
	Premiums	 & & &&&&&&&&\\
	&	Collision   & 187 & 104  &  53   & 74   & 117 & 162 & 227 & 383 & 565 \\\\[-1.5ex]
	&	Comprehensive  & 117 & 86 & 29   & 41   & 69  & 99  & 141 & 242 & 427\\\\[-1.5ex]
		\hline \\[-1.5ex]
	Claim probs	 &&&&&&&&&&\\
	&	Collision   & 0.081 &  0.026 & 0.036 & 0.045 & 0.062 & 0.077 &  0.096 & 0.128 & 0.156 \\\\[-1.5ex]
	&	Comprehensive & 0.023& 0.012 & 0.005 & 0.008 & 0.014 & 0.021 &  0.030 & 0.045 & 0.062 \\\\[-1.5ex]
		\hline \hline\\
	\end{tabular}}
\end{table}

\subsection{Evidence in support of unobserved heterogeneity in $C_i$}\label{subsection:why_unobs_C}
As discussed in, e.g., \citet{BMT16,BCMT21,BaMoTh21}, standard models of risk preferences fail to rationalize some salient data patterns. 
First, in our data the pricing rule in collision coverage is such that (virtually) no household, regardless of their preference type and random coefficient, should choose the $\$200$ deductible under full consideration. 
The reason is that for agents with lower risk aversion (probability distortions) it is dominated by the $\$250$ deductible, and for agents with higher risk aversion (probability distortions) it is dominated by the $\$100$ deductible.\footnote{An analogous fact can be established even if an i.i.d., type-specific, noise term were added to the utility function in Eq. \eqref{eq:U_final} at the coverage level or, more broadly, for any model that abides a notion of generalized dominance formally defined in \citet{BaMoTh21}. } 
A limited consideration model, even in the case where the consideration set forms narrowly (i.e., with $C_i^\cl$ drawn independently from $C_i^\cm$ and $C_i=C_i^\cl\times C_i^\cm$) has no problems explaining such a pattern, because it allows for the $\$200$ deductible to be considered without either $\$100$ or $\$250$. 
Under Assumption \ref{ass:consideration} (consideration sets drawn at the bundle level), that is not necessary, because utility comparisons are at the bundle level.

Second, the joint probability mass function of choices across contexts (see Table \ref{Table_jointD}) exhibits a striking pattern where virtually none of the 7,736 households purchase a deductible in comprehensive that exceeds the deductible they purchase in collision. 
Unless prices (and claim probabilities) exhibit strong negative correlation, a feature that does not occur in our data, standard models (e.g., a Mixed Logit with full consideration) under the assumption of context invariant preferences will struggle to replicate this pattern. 

A final note pertains to modeling limited consideration as operating at the bundle level, rather than independently across contexts.
A model where limited consideration operates independently across contexts may be successful in matching the marginal distribution of choices within each context, but not the joint \citep[see the working paper][Section 7.3.3]{BaMoTh19}. 
The limited consideration model studied in this paper, by operating on the bundles, does have the capacity to match the joint distribution of choices. 
By doing so, it also resolves the preference stability debate discussed in, e.g., \cite{Barseghyan2011,Einav2012,BMT16}.
This debate is centered around the fact that while households’ risk aversion relative to their peers is correlated across lines of coverage, implying that households preferences have a stable component, analyses based on revealed preference reject the standard models: under full consideration, for the vast majority of households one cannot find a level of (household-specific) risk aversion that justifies their choices simultaneously across all contexts. 
Limited consideration allows the model to match the observed joint distribution of choices, and hence their rank correlations.
Under limited consideration, testing for preference stability amounts to asking whether one can find a consideration set and a random coefficient (preference parameter) which jointly rationalize an agent's choice, which is inherently weaker then asking whether one can find preferences that rationalize the agent's choice under full consideration \citep[see, e.g.,][]{BCMT21}.

\section{Estimation Results}\label{sec:results_estimation}

\begin{table}\caption{Estimated Probability of Considering each Deductibles Pair} \label{Table_consideration} 
	\centering
	\begin{tabularx}{\textwidth}{c *{15}{Y}}%{clcccccc}    
		\hline\hline
		\textbf{}     & \multicolumn{12}{c}{\textbf{Comprehensive}}\\
		\multicolumn{1}{l}{\textbf{Collision}}   & \multicolumn{2}{c}{\$50} & \multicolumn{2}{c}{\$100} & \multicolumn{2}{c}{\$200} & \multicolumn{2}{c}{\$250} & \multicolumn{2}{c}{\$500} & \multicolumn{2}{c}{\$1,000} \\ \hline

\$100 & \multicolumn{2}{c}{0.05} &\multicolumn{2}{c}{0.01}&\multicolumn{2}{c}{0.00}& \multicolumn{2}{c}{0.00}& \multicolumn{2}{c}{0.00}& \multicolumn{2}{c}{0.00}  \\ 
& [0.04 & 0.06] & [0.01 & 0.01] & [0.00 & 0.00] &   &   &   &   &   &   \\ 
\hline
\$200 & \multicolumn{2}{c}{0.12} &\multicolumn{2}{c}{0.04}&\multicolumn{2}{c}{0.29}& \multicolumn{2}{c}{0.00} &\multicolumn{2}{c}{0.00}&\multicolumn{2}{c}{0.01} \\ 
& [0.1 & 0.13] & [0.04 & 0.05] &[0.27 & 0.3] & [0.00 & 0.00] & [0.00 & 0.00] & [0.00 & 0.01] \\ 
\hline
\$250&\multicolumn{2}{c}{0.04}&\multicolumn{2}{c}{0.03}&\multicolumn{2}{c}{0.08}&\multicolumn{2}{c}{0.09}&\multicolumn{2}{c}{0.00}&\multicolumn{2}{c}{0.00}\\ 
& [0.03 & 0.05] & [0.02 & 0.03] & [0.07 & 0.08] & [0.09 & 0.10] & [0.00 & 0.00]&   \\ 
\hline
\$500 & \multicolumn{2}{c}{0.13}&\multicolumn{2}{c}{0.06}&\multicolumn{2}{c}{0.46}&\multicolumn{2}{c}{0.18}&\multicolumn{2}{c}{0.83}&\multicolumn{2}{c}{0.00}\\ 
& [0.11 & 0.15] & [0.05 & 0.07] & [0.44 & 0.47] & [0.18 & 0.20] & [0.81 & 0.84] &   &   \\
\hline 
\$1000 & \multicolumn{2}{c}{0.04}&\multicolumn{2}{c}{0.03}&\multicolumn{2}{c}{0.18}&\multicolumn{2}{c}{0.07} &\multicolumn{2}{c}{0.47}&\multicolumn{2}{c}{1.00}\\ 
& [0.02 & 0.07] & [0.01 & 0.05] & [0.14 & 0.22] & [0.05 & 0.10] & [0.43 & 0.52] &   &   \\ 
\hline 
\hline 
\end{tabularx}
     {\raggedright \small Notes: 95\% confidence intervals obtained via subsampling in square brackets.\par}
\end{table}

We begin our discussion of the estimates that we obtain through MLE by focusing on the type of limited consideration that we uncover, and its role in the results one obtains when estimating preferences.
Table \ref{Table_consideration} reports the estimated consideration probabilities for each bundle (these are the $\arc_\cI$ coefficients in Assumption \ref{ass:ARC}), along with 95\% confidence intervals obtained by subsampling.\footnote{We use subsampling because the parameter vector is on the boundary of the parameter space.}
The estimated model is very far from a full consideration one. 
Bundles where the collision deductible is strictly lower than the comprehensive one are almost never considered (the probability that the bundle $(\$200,\$1000$) is considered is 1/100, and all others are zero).\footnote{Given the choice patterns in the data discussed in Section \ref{subsection:why_unobs_C}, this is not surprising, as MLE sets the consideration probability of never-chosen bundles to zero.}
The cheapest bundles, excluding the one where the collision deductible is lower than the comprehensive one, are considered most often (the consideration probabilities for $(\$500,\$500)$ and $(\$1000,\$500)$ are, respectively,  0.83 and 0.47).\footnote{Recall that we assume that $(\$1000,\$1000)$ is considered with probability one.}

The presence of limited consideration alters inference about preference types and about the distribution of the random coefficient within each type in essentially every possible way. 
To illustrate these effects,  we estimate preferences in a pure random coefficients model under three scenarios for the consideration set formation mechanism: limited consideration as in Assumption \ref{ass:ARC} (our proposed model); \emph{triangular consideration}, where for $\cI=[\ell^\cl,q^\cm]$, $\arc_\cI=0$ when $\ell^\cl<q^\cm$ and $\arc_\cI=1$ when $\ell^\cl\ge q^\cm$; and full consideration, where $\arc_\cI=1$ for all $\cI\in\cD$.
In all cases, we estimate a model where households choose their optimal bundle according to Eq.~\eqref{eq:max_CE} with the utility function in Eq.~\eqref{eq:U_final}.\footnote{Under full consideration, the likelihood of observing non-zero shares of never-the-first-best alternatives is zero. Due to this, in estimation we set the consideration probability of each bundle to 0.99 instead of 1.00.}

Figure \ref{Fig:estimatedPrelec} depicts the resulting Prelec distortion function in Assumption \ref{ass:Omega} when $\Ypar_i$ equals the mean, median, 25th and 75th quantile of the distribution $G(\Ypar)$ estimated in the limited consideration model (left panel), in the triangular consideration model (center panel), and in the full consideration model (right panel), each with a mixture of types.
As the figure illustrates, there is substantial variation in the function across these different values of $\Ypar$, and all functions are substantially far from the $45^o$ line, indicating substantial over-weighting of small probabilities.
Of notice is the fact that the over-weighting is larger in the limited consideration model than in the triangular or in the full consideration model. 
\begin{figure}
\centering
\adjincludegraphics[scale=.6,Clip={.05\width} {.2\height} {0.1\width} {.15\height}]{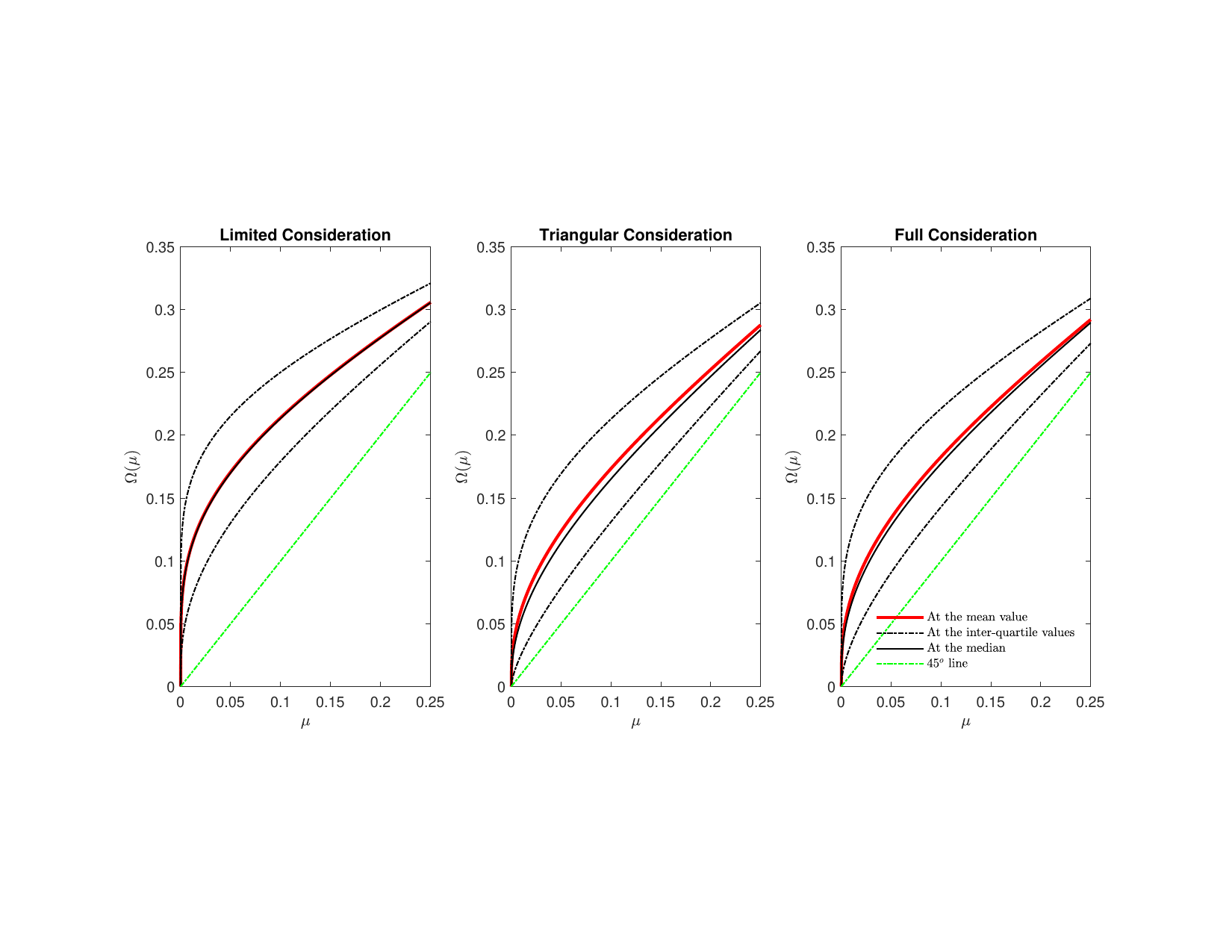}
\caption{\small{The function $\Omega(\mu)$ when $\Ypar$ equals its estimated mean,  25th,  50th,  or 75th quantile, in a model with limited (left), triangular (middle), or full (right) consideration.}}\label{Fig:estimatedPrelec}
\end{figure}

\begin{figure}
\centering
\adjincludegraphics[scale=.6,Clip={.05\width} {.2\height} {0.1\width} {.15\height}]{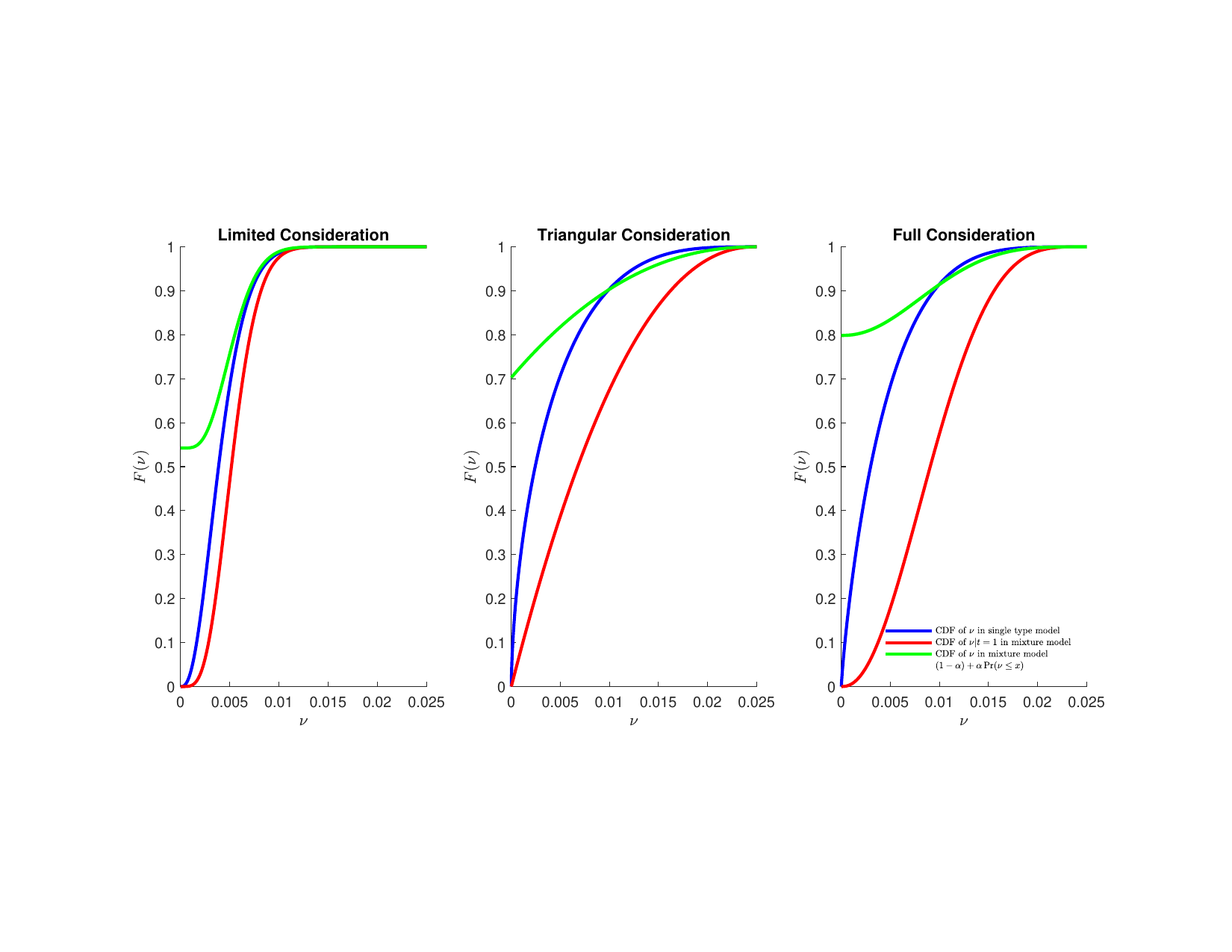}
\caption{\small{Estimated $F(\Epar)$ in a model that assumes limited (left), triangular (middle), or full consideration (right).}}\label{Fig:estimatedCARA}
\end{figure}

Figure \ref{Fig:estimatedCARA} depicts the cumulative distribution function $F(\cdot)$ in our limited consideration model (left panel), in the triangular consideration model (middle panel), and in the full consideration model (right panel).
Each panel depicts $F(\cdot)$ for a model that assumes that all households are of the EU type (blue line), for our model with a mixture of EU and DT types (red line), and, for the mixture model, also the implied cumulative distribution function for the entire population, where the $(1-\alpha)$ share of DT households has $\Epar=0$.
The important feature to notice is that in all panels of Figure \ref{Fig:estimatedCARA}, the risk aversion displayed is much higher for the EU households in the mixture model than in the single-type model, and the discrepancy grows from the limited to the triangular to the full consideration model.

In Table \ref{Table_WTP} we analyze the same interplay between consideration and preferences from a different angle.
We report the estimated excess willingness to pay (WTP) of households in our sample to avoid a lottery where with probability 10\% the household loses \$500 (hence, the total WTP equals $\$50$ plus the values reported in the table).

\begin{table}\caption{\small{Excess willingness to pay to avoid lottery where with probability 10\% agent loses \$500}}\label{Table_WTP}
 {\centering
 \resizebox{\textwidth}{!}{\begin{tabular}{l c c c c c c c c c}
%          ~ &\multicolumn{9}{c}{\textbf{Excess Willingness to Pay}}                                                                                                                                                          \\ 
\hline\hline 
%          \\
        ~ & Mean & Median & $1^{st}$ Quar. & $3^{rd}$ Quar. & ~ & Mean & Median & $1^{st}$ Quar. & $3^{rd}$ Quar. \\
         \hline
         \\
            \textbf{Mixture:} & \multicolumn{4}{c}{\textbf{All Population}}  &\textbf{EU share} & & & &                                                                                                                                                         \\
% & & & & & & & & &       \\
        Limited Consideration& 82.10 & 79.95 & 61.82 & 100.49  & $\alpha=0.46$ &  &  &  &   \\ 
% & & & & & & & & &       \\
 Lower Triangular & 73.27 & 72.38 & 42.33 & 101.64 & $\alpha=0.30$ &  &  &  &  \\ 
 %\\
        Full Consideration&73.49 & 72.42 & 54.31 & {\color{white}1}91.72  & $\alpha=0.20$ &  &  &  &  \\ 
     %   \\
         & \multicolumn{4}{c}{\textbf{EU type}}                                                                                                                                                          & & \multicolumn{4}{c}{\textbf{DT type}}\\
% & & & & & & & & &       \\
        Limited Consideration& 110.74 & 104.47 & {\color{white}1}69.29 & 146.60 & & 57.95 & 56.73 & 39.70 & 75.11  \\ 
% & & & & & & & & &       \\
 Lower Triangular & 155.56 & 148.83 & {\color{white}1}50.94 & 255.07 &  & 38.47 & 32.59 & 15.54 & 56.49 \\ 
% \\
        Full Consideration &194.83 & 205.93 & 127.16 & 267.12 &  & 42.88 & 38.72 & 21.52 & 60.59 \\ 
        \\
        \hline
          \\
      \textbf{Single Type:} & \multicolumn{4}{c}{\textbf{All Population EU}}                                                                                                                                                          & ~& \multicolumn{4}{c}{\textbf{All Population DT}}\\
 %        \\
        Limited Consideration
         & {\color{white}1}81.74 &  {\color{white}1}69.72 &  {\color{white}1}39.58 & 113.45 & & 74.01 & 75.03 & 51.79 & 97.15
        \\
%        \\
        Lower Triangular & {\color{white}1}76.19 & {\color{white}1}37.61 & {\color{white}11}8.64 & 121.29 &  & 43.07 & 37.26 & 17.60 & 63.90 \\ 
  %      \\
        Full Consideration &       {\color{white}1}80.87 & {\color{white}1}49.17 & {\color{white}1}15.01 & 127.35 &   & 47.16 & 42.99 & 23.38 & 67.57  \\ 
        \\
        \hline \hline
    \end{tabular}}}

{\flushleft\footnotesize{\emph{Notes}. Top panel: excess WTP in our model for the overall population and within each preference type. 
                Bottom panel: excess WTP for a single-type model, where all agents are either EU or DT.}}
                \end{table}

A first feature to notice is that the estimated share of EU types is much higher when the model allows for limited consideration than in models that assume triangular or full consideration (almost a half versus 30\% and 20\% respectively). 
The implied degree of aversion to risk changes for households of both preference types, but in opposite directions. 
%The raw parameters are reported in Table X. We interpret these parameters in terms of willingness to pay to avoid a simple lottery of loosing $\$500$ with 10 percent probability. 
The top left panel of Table \ref{Table_WTP} shows that if one disregards limited consideration, one infers that the risk aversion of EU types is much higher (more than 40\% according to our metric) than under limited consideration, but the aversion to risk of DT types is about one third lower under full consideration (and similarly for triangular consideration).
The cumulative effect of limited consideration in the overall population results in a near 12 percent higher willingness to pay to avoid the simple lottery relative to a model that imposes full consideration.\footnote{These results are sensitive to the choice of the simple lottery to benchmark willingness to pay.  Changing the stakes will induce a non-linear response by the EU types but a linear one by the DT types. Changing the loss probability will induce a non-linear response by the DT types but a linear one by the EU types. }

We conclude by observing that both the full and the triangular consideration model cannot rationalize the choices of a substantial fraction of households in our data and in general deliver a poor fit, as shown in Figure \ref{Fig:model_fit}.
Even adding an Extreme Value Type I error term to the utility function in Eq.~\eqref{eq:U_final} and estimating a Mixed Logit model does not remedy this problem.
Indeed, the Mixed Logits do not fit our data well, while our limited consideration model essentially replicates the observed shares. 

For completeness, in the figure we also display the fit of a limited consideration model where consideration is narrow and choice follows from Eq.~\eqref{eq:choice_narrow_C}. 
While this model fits the data well relative to the Mixed Logit models with full or triangular consideration (compare the third panel to the top two panels in Figure \ref{Fig:model_fit}), it falls short of our benchmark model. 
This is not surprising: by construction, this model is restrictive in how bundles enter the consideration sets.
As a result, it cannot, e.g., set the shares of bundles with $\dor^\cl<\dor^\cm$ to zero, or match certain features of the joint distribution of chosen alternatives in the two contexts, such as rank correlations of choices across the two different coverages.\footnote{The narrow consideration model implies a rank correlation of .42 while in the data and under the broad consideration model this coefficient equals .61 and .62, respectively. In comparison, in the Mixed Logit model with full consideration this correlation is .45, while with lower triangular consideration it is .65.}

\begin{figure}
\centering
\includegraphics[scale=.65]{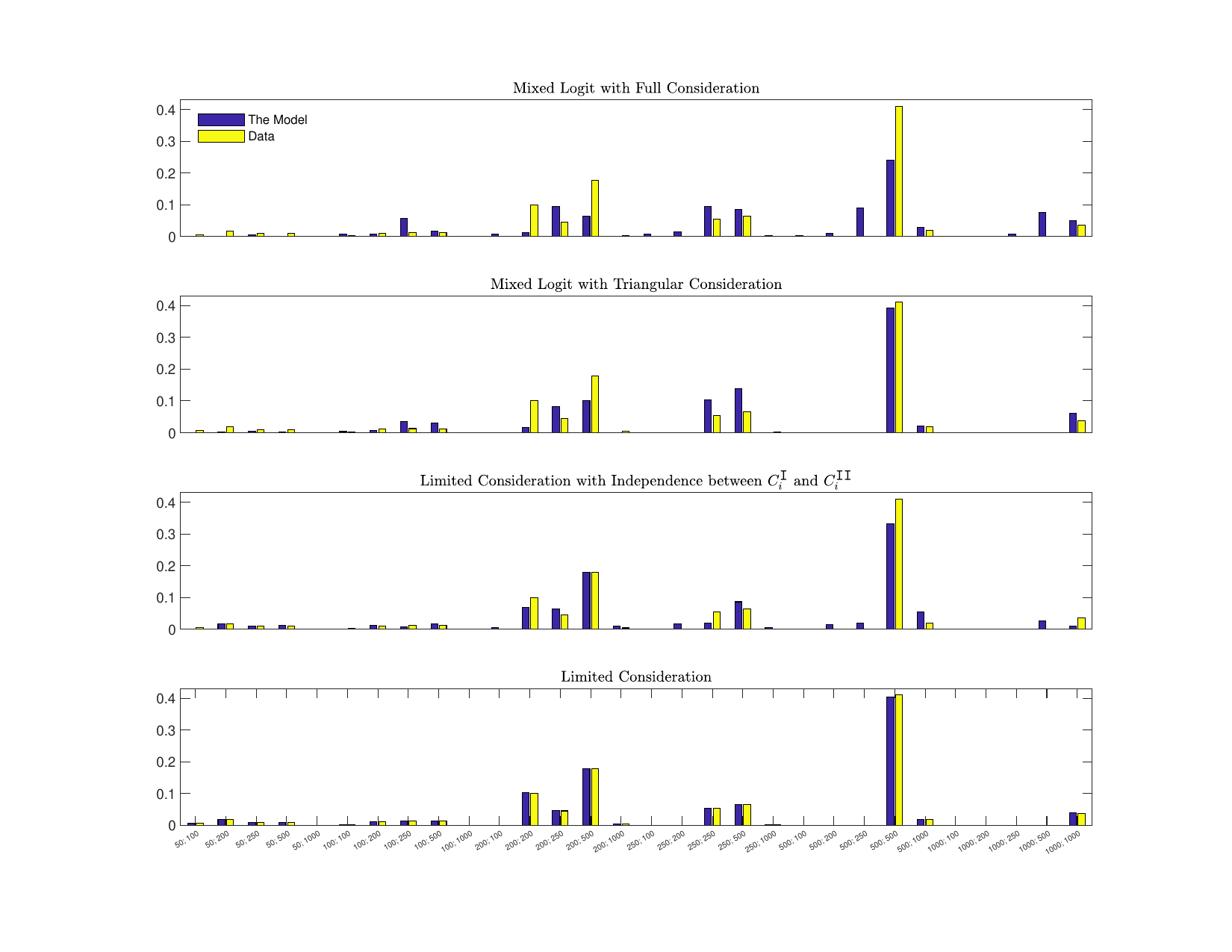}
\caption{\small{Choice probabilities for deductible bundles in a Mixed Logit model with full consideration and with triangular consideration (first and second panel), and in a limited consideration model with narrow and with broad consideration (third and forth panel).}}\label{Fig:model_fit}
\end{figure}

\section{Implications for Welfare Analysis}\label{sec:welfare_results}

In our setting, there are three channels for potential welfare losses. 
First, limited consideration may prevent agents from choosing their first best. 
Second, if the probability distortions are capturing a mismatch between subjective and objective beliefs about loss probabilities,\footnote{See, e.g., the model with imperfect information in \citet[Example 2]{gua:sin23}.} agents may not choose their objective first best, even if they consider it.
Third, non-expected utility maximizing households (the DT type in our model) may be open to \emph{nudging}, whereby modifications of market features that leave the behavior (and welfare) of EU households mostly unchanged may trigger large changes in behavior (and welfare) of DT households.

We therefore conduct two welfare exercises aimed at assessing the impact of each of these channels on the welfare of households purchasing auto deductible insurance.
In the first exercise, we estimate the impact on welfare of all households having full consideration.
To do so, we take the preferences estimated using our limited consideration model, predict each household's optimal choice from the entire menu $\cD$, and compute each household's utility gain (in certainty equivalent terms).
To carry out this exercise, we need to take a stand on how does the household value alternatives.
For the EU type, we use their choice utility (also called decision utility), i.e., the CARA utility function (with $\Epar$ distributed according to our estimate of the distribution $F$).
For the DT types, we report results both for their choice utility, i.e., using the Prelec distortion function in Eq.~\eqref{eq:Yaari} (with $\Ypar$ distributed according to our estimate of the distribution $G$); and for the case where the probability distortion function is completely removed, so that $\Omega(\mu)=\mu$ and the household values alternatives based on their net present value (NPV).
This also allows one to think about the effect of eliminating the mismatch between subjective and objective beliefs about loss probabilities, if this is what the probability distortion function captures.

In the second exercise,  we propose a restructuring of the auto insurance market where collision and comprehensive coverage are offered as a single auto insurance product with
\begin{align*}
\cD^{auto}&=\{100,200,250,500,1000\}\\
\mu^{auto}&=\mu^{\cl}+\mu^{\cm}\\
\covx^{\ell\,auto}&= \covx^{\cl\ell}+\covx^{\cm\ell}
\end{align*}
where $\mu^{auto}$ is the probability of experiencing a claim in either collision or comprehensive (we disregard the probability that a claim occurs in both contexts within the policy period as this probability is extremely low in our data) and $\covx^{\ell\,auto}$ is the premium charged for an auto coverage that offers the same deductible in collision and comprehensive when firms operate under perfect competition or if they use a constant markup rule.

Again, we take the preferences estimated using our limited consideration model, predict each household's optimal choice, and compute the household's utility gain/loss (in certainty equivalent terms). 
However, to carry out the exercise not only do we need to take a stand on how does the household value alternatives, but, importantly, also on how does the household draw its consideration set after the intervention.
For the former, we proceed as in our first welfare exercise, and report results where the EU types value alternatives based on their choice utility, and DT types based on both their choice utility and on the alternatives' NPV.
For the latter, we report our results under several scenarios, detailed below.
This exercise may help inform the debate on the need to ``simplify insurance choice," and clarify the role of limited consideration in mediating nudging effects.

Before presenting the results of these two exercises, we explain why EU and DT households may respond differently to an intervention that combines collision and comprehensive into a single coverage. 
A defining feature of the DT model is that it is non-linear in probabilities. 
Hence, offering insurance as a bundle or as a single product may have a first order impact on DT households' choices and welfare. 
To see why, suppose the probability distortion function is strictly sub-additive (as is the case in our estimated model). 
Then, under the maintained assumption of narrow bracketing (Assumption \ref{ass:narrow}), the agent's willingness to pay to avoid a $\$500$ loss which occurs with a 10 percent chance, is strictly lower than twice their willingness to pay to avoid the same loss with 5 percent chance. 
Put differently, a single insurance product against two (mutually exclusive) identical losses, instead of a bundle of two products, reduces the degree of over-weighting of loss probabilities. 
At the same time, combining insurance products into one line of insurance limits choice, and may eliminate the first best alternative. 
Ceteris paribus, for a fully rational agent making choices according to the EU model, this can only be welfare reducing. 
Interestingly, there are examples of insurance products that are indeed sold both as a single coverage and as a bundle, such as single limit liability coverage versus bodily injury and property damage in auto insurance. 
\begin{table}\caption{Welfare implications of limited consideration and of combining collision and comprehensive into a single coverage (95\% confidence intervals in square brackets)}\label{Table_welfare}
	\centering  \resizebox{\textwidth}{!}{
\begin{tabular}{lcccccccc}
	   \hline\hline
& \multicolumn{2}{c}{\textbf{Choice}}& \multicolumn{2}{c}{\textbf{CARA}} & \multicolumn{4}{c}{\textbf{As a \% of Average price}}\\ 
& \multicolumn{2}{c}{\textbf{Utility}}& \multicolumn{2}{c}{\textbf{NPV}} & \multicolumn{4}{c}{\textbf{for (1000,1000) (\$238)}}\\ 

\hline 
\textbf{All at Full Consideration} & \multicolumn{2}{c}{30.1} & \multicolumn{2}{c}{18.2} & \multicolumn{2}{c}{12.7} & \multicolumn{2}{c}{7.6}\\ 
& [28.3 & 31.9] & [16.0 & 20.3] & [11.9 & 13.4] & [6.7 & 8.5] \\ 
 \hline
%~ & ~ &  ~ & ~ &~   \\ 
\textbf{Bundled Auto Insurance}: &  &  &  &  &  &  &  &  \\ 
\\
Worst Case Consideration & \multicolumn{2}{c}{-3.2} &\multicolumn{2}{c}{-18.7} & \multicolumn{2}{c}{-1.3}&\multicolumn{2}{c}{-7.9} \\ 
& [-8.0  & 1.5] & [-22.2 & -15.2] &[ -3.4 & 0.6] & [-9.3 & -6.4] \\ 
%\\
Middle Case Consideration & \multicolumn{2}{c}{48.1}&\multicolumn{2}{c}{26.5} & \multicolumn{2}{c}{20.2} &\multicolumn{2}{c}{11.1} \\ 
& [45.9 & 50.3] & [25.4 & 27.7] & [19.3 & 21.1] & [10.7 & 11.7] \\ 
%\\
All at Full Consideration & \multicolumn{2}{c}{52.3}& \multicolumn{2}{c}{29.6} &\multicolumn{2}{c}{22} &\multicolumn{2}{c}{12.5} \\ 
& [50.0 & 54.7] & [28.2 & 31.1] & [21 & 23] & [11.8 & 13.1] %\\ 
 \\       \hline \hline
\end{tabular}}
\end{table}
In summary, our first welfare exercise addresses the question: what is the (average) welfare cost associated with limited consideration? 
Our second welfare exercise addresses the question: what are the welfare implications of combining collision and comprehensive into a single product, and how does the presence of limited consideration alter these implications? 

The top panel of Table \ref{Table_welfare} reports our estimates of the welfare losses due to limited consideration. 
Using the choice utility for each preference type, the welfare losses are about \$30, or 12.7\% of the average price of the cheapest bundle. 
The effect is smaller (\$18 or 7.6\%) if for DT types we use the alternatives' NPV as their value (i.e., we shut down the probability distortion).
This is expected, since all utilities and utility differences decrease.

The bottom panel of Table \ref{Table_welfare} reports estimated welfare changes associated with combining collision and comprehensive insurance into a single product.
We carry out the exercise for three different ways in which consideration sets may be drawn after the market intervention.
In the worst case scenario, in the sense that consideration is lowest, the probability that deductible $\dor$ is considered equals the estimated consideration probability for bundle $(\dor,\dor),\dor\in\cD^{auto}$. 
In this case, the impact of the intervention is negative, although the magnitude of the effect depends substantially on how the welfare of DT types is evaluated.
This is because under choice utility, following the intervention, DT types overweight the overall loss probability to a lesser degree than they did with separate coverages, and this effect attenuates substantially the welfare reduction from not being able to choose from a larger menu.
On the other hand, when welfare of DT types is evaluated according to NPV, although the overweighting of loss probabilities affects choice, it does not enter the welfare calculations. 

Under full consideration, the best case scenario, the welfare gains for both evaluation approaches are positive and large. 
Relative to the worst case scenario, this is, of course, expected. 
What is more interesting is that the welfare gains are higher than those obtained in the counterfactual of full consideration that maintains the status-quo separation between collision and comprehensive insurance.
This is because under full consideration, the EU types are worse off when the collision and comprehensive are combined into a single product (for them, the choice set is being reduced without any associated benefit); however, the DT types, despite facing a smaller choice set, benefit from such a reduction because in making choices they overweight losses by a smaller degree. 
The latter effect dominates, more so when welfare is computed based on choice utility rather than on NPV. 

For completeness we also report welfare changes for a case that we label ``middle consideration,'' in which each deductible in the combined single coverage is considered with a probability equal to the sum of the probability that it is considered either as collision or comprehensive deductible (or with probability one if the sum exceeds one).
The results are reported in the middle row of the bottom panel of Table \ref{Table_welfare}.
Even with this intermediate consideration level, the welfare gains are substantial.

Based on these welfare exercises, we argue that the interplay between features of the decision making process at the utility evaluation level and of the consideration mechanism cannot be ignored when analyzing possible market interventions. 
In the second welfare exercise carried out above, reducing the feasible set may lead to unambiguous welfare gains, provided consideration increases. 
However, if consideration does not increase, the same intervention can lead to welfare losses that exceed the gains stemming from nudging the non-expected utility maximizers in the population.    

\section{Discussion}\label{sec:discussion}
This paper provides semi-nonparametric point identification results for a model of discrete choice under risk that allows for unobserved heterogeneity in preference types,  unobserved heterogeneity within each type, and unobserved heterogeneity in consideration sets, while confronting the fact that the covariates $\covx$ characterizing products do not exhibit independent variation across alternatives within a context, but only across contexts.
We apply our method to study demand for deductible insurance in two lines of property insurance, and to analyze the welfare implications of an hypothetical market intervention where the two lines of insurance are combined into a single one.
Our findings provide evidence of the importance of allowing for the rich amount of unobserved heterogeneity that our model features.

The choice environment that we study in this paper is similar to that studied in \cite{BaMoTh21}.
They offer a comprehensive analysis of the implications of the Spence-Mirlees single crossing property for semi-nonparametric identification of a model of discrete choice under risk that features a single preference type and unobserved heterogeneity in consideration sets.
They also illustrate the tradeoff between the common exclusion restrictions and the restrictions on consideration set formation required for semi-nonparametric point identification.
Their work is the closest to ours.
However, in our model consideration sets are formed at the bundle level (i.e., across contexts), and hence the single crossing property that both \cite{BaMoTh21} and we assume to hold within a context, may not necessarily hold across tuples of alternatives.
This is because bundles may not be monotonically ranked (with respect to preference parameters) against each other.
Hence, the results in \citet{BaMoTh21} do not apply and in this paper we develop a new approach to obtain point identification of the distribution of preferences, of the shares of preferences types, and of features of the distribution of consideration sets given type.\footnote{As we allow for multiple preference types, our analysis extends that of \citet{BaMoTh21} even in the simplified framework where consideration is independent across contexts.}
In \cite{BM23} we show that in a richer data environment where the researcher observes a characteristic for each alternative that displays independent variation both across agents and across alternatives, and that affects utility but not consideration,  semi-nonparametric point identification holds for a flexible pure random coefficients model with unrestricted dependence between the random coefficients and the consideration set formation mechanism.

The challenges posed to identification of discrete choice models by unobserved heterogeneity in consideration sets have long been recognized \citep[e.g.,][]{Manski1977}.\footnote{Many important papers in the theory literature---including papers on revealed preference analysis under limited attention, limited consideration, rational inattention, and other forms of bounded rationality that manifest in unobserved heterogeneity in consideration sets---also grapple with the identification problem \citep[e.g.,][]{masatioglu_12,man:mar14,Caplin2015,Lleras2017,Cattaneo2019}. However, these papers generally assume rich datasets---e.g., observed choices from every possible subset of the feasible set---that often are not available in applied work, especially outside of the laboratory. }
It is not uncommon for the problem to be ignored, as a textbook assumption is that agents pick an alternative to maximize their utility over the entire feasible set.
When heterogeneity in consideration sets is allowed for, point identification of the model often relies on the availability of auxiliary information about the composition or distribution of agents' consideration sets, or on two-way exclusion restrictions, whereby certain variables impact consideration but not preferences and vice versa.
A third approach relies primarily on restrictions to the consideration set formation process.\footnote{Examples for the first approach include \citet{DelosSantos2012, Conlon2013,HonkaRAND2017,HonkaMS2017}; for the second, \citet{Goeree2008,vanNierop2010,Gaynor2016,Heiss2016}.
Recent examples for the third approach include \citet{Abaluck2019,Crawford2019,Lu2018}.}

When such assumptions may not be credible and one does not have access to auxiliary data or valid exclusion restrictions, \cite{BCMT21} provide a method to obtain informative sharp identification regions for the parameters of discrete choice models, even when preferences and consideration sets may depend on each other, under the assumption that agents' consideration sets include at least two alternatives.
\citet{Cattaneo2019,cat:che:ma:mas21} provide revealed preference theory, testable implications, and partial identification results for preference orderings and attention frequency, in very general models of limited consideration but without heterogeneity in preferences, under the assumption that one observes agents repeated choices (in a single context) while facing varying choice sets.

\appendix
\section{Appendix: Proof of Theorem \ref{T1alt} and Corollary \ref{cor:T1}}
\begin{proof} [Proof of Theorem \ref{T1alt}]\label{T1p_alt} 

Fix $\Epar\in[\Epar^*,\Epar^{**}]$ and the corresponding $\Covx^1(\Epar)$.
By Assumption \ref{ass:var_x}, $\Covx^1(\Epar)$ is non-empty and there is an $\epsilon$-ball around it of positive density.
By Definition \ref{def:indifferenceX}, for any $(\covx^\cl,\covx^\cm)\in \Covx^1(\Epar)$, $\EEcut^{1,1}_{2,1}(\covx^\cl)=\EEcut^{1,1}_{1,2}(\covx^\cm)=\Epar$.
Along with Assumption \ref{ass:distinct_contexts}, this implies that there are vectors $\covx'=(\covx^{\cl \prime},\covx^{\cm \prime}) \in \B_\epsilon(\Covx^1(\Epar))$ and $\covx''=(\covx^{\cl \prime \prime},\covx^{\cm \prime \prime})\in \B_\epsilon(\Covx^1(\Epar))$ such that $\Epar=\EEcut^{1,1}_{2,1}(\covx^{\cl \prime})<\EEcut^{1,1}_{1,2}(\covx^{\cm \prime})$ and $\EEcut^{1,1}_{2,1}(\covx^{\cl \prime \prime})>\EEcut^{1,1}_{1,2}(\covx^{\cm \prime \prime})$. 
We claim that
\begin{align}\label{T1p_eq_1}
	\lim\limits_{\covx',\covx''\rightarrow \covx}\left(\frac{\partial \Pr(\cI^*=\cI_{1,1}|\covx')}{\partial \covx^{\cl}}-\frac{\partial \Pr(\cI^*=\cI_{1,1}|\covx'')}{\partial \covx^{\cl}} \right)=\alpha f(\Epar) \cdot \cSE(\covx,\cOE),
\end{align}
where $\cSE(\covx,\cOE)$ is a function of $\covx$ and of the consideration probabilities given by:
\begin{align}
	\cSE(\covx,\cOE) & = \cOE(\{\cI_{1,1},\cI_{2,1}\};\emptyset)\frac{\partial \EEcut^{1,1}_{2,1}(\covx)}{\partial \covx^{\cl}}+\cOE(\{\cI_{1,1},\cI_{2,2}\}; \cI_{2,1})\frac{\partial \EEcut^{1,1}_{2,2}(\covx)}{\partial \covx^{\cl}}\notag\\
	& - \left(\cOE(\{\cI_{1,1},\cI_{2,2}\};\cI_{1,2})\frac{\partial \EEcut^{1,1}_{2,2}(\covx)}{\partial \covx^{\cl}}+\cOE(\{\cI_{1,1},\cI_{2,1}\}; \{\cI_{2,2},\cI_{1,2}\}) \frac{\partial \EEcut^{1,1}_{2,1}(\covx)}{\partial \covx^{\cl}}\right)\label{eq:S_Epar}
\end{align}
Under Assumption \ref{ass:MIC}-\ref{MIC1},  Eq.~\eqref{eq:S_Epar} simplifies to\footnote{These derivations are based on repeated use of facts such as
\begin{align*}
\cOE(\{\cI_{1,1},\cI_{2,2}\}; \emptyset)&=\cOE(\{\cI_{1,1},\cI_{2,2},\cI_{2,1}\}; \emptyset)+\cOE(\{\cI_{1,1},\cI_{2,2}\}; \{\cl_{2,1}\})
=\cOE(\{\cI_{1,1},\cI_{2,2},\cI_{1,2}\}; \emptyset)+\cOE(\{\cI_{1,1},\cI_{2,2}\}; \{\cl_{1,2}\})
\end{align*}
}
\begin{align}
	\cSE(\covx,\cOE) %& = \cOE(\{\cI_{1,1},\cI_{2,1}\};\emptyset)\frac{\partial \EEcut^{1,1}_{2,1}(\covx)}{\partial \covx^{\cl}}-\cOE(\{\cI_{1,1},\cI_{2,1}\}; \{\cI_{2,2},\cI_{1,2}\}) \frac{\partial \EEcut^{1,1}_{2,1}(\covx)}{\partial \covx^{\cl}}\notag\\
	 & =\Big(\cOE(\{\cI_{1,1},\cI_{2,1}\};\emptyset)-\cOE(\{\cI_{1,1},\cI_{2,1}\}; \{\cI_{2,2},\cI_{1,2}\})\Big)\frac{\partial \EEcut^{1,1}_{2,1}(\covx)}{\partial \covx^{\cl}}\neq 0\label{eq:S_MIC1}
\end{align}
Under Assumption \ref{ass:MIC}-\ref{MIC2},  Eq.~\eqref{eq:S_Epar} simplifies to
\begin{align}
	\cSE(\covx,\cOE) & =\Big(\cOE(\{\cI_{1,1},\cI_{2,2}\}; \cI_{2,1})-\cOE(\{\cI_{1,1},\cI_{2,2}\}; \cI_{1,2})\Big)\frac{\partial \EEcut^{1,1}_{2,2}(\covx)}{\partial \covx^{\cl}}\neq 0\label{eq:S_MIC2}
\end{align}

To derive the expression for $\cSE(\covx,\cOE)$ in Eq.~\eqref{eq:S_Epar}, we return to Eq.~\eqref{Derivate_ij}, which states
\begin{align*}
		\frac{\partial\Pr(\cI^*=\cI_{1,1}|\covx)}{\partial \covx^{\cl}} 
	&= \alpha \sum\limits_{(k,r)\neq(1,1)} \cOE(\{\cI_{1,1},\cI_{k,r}\};\cB(\cI_{1,1},\covx;\EEcut^{1,1}_{k,r}))f(\EEcut^{1,1}_{k,r})\frac{\partial \EEcut^{1,1}_{k,r}}{\partial \covx^{\cl}}\notag\\
	&+(1-\alpha) \sum\limits_{(k,r)\neq(1,1)} \cOY(\{\cI_{1,1},\cI_{k,r}\};\cB(\cI_{1,1},\covx;\YYcut^{1,1}_{k,r}))g(\YYcut^{1,1}_{k,r})\frac{\partial \YYcut^{1,1}_{k,r}}{\partial \covx^{\cl}}
\end{align*}
Under Assumptions \ref{ass:distinct_contexts}-\ref{distinct_sufficient} and \ref{ass:var_x}, when $\covx'$ and $\covx''$ are sufficiently close to $\covx$, the relative order of the cutoffs for type $t_i=0$ preferences, $\YYcut^{1,1}_{k,r}$, does not change. 
For type $t_i=1$ preferences, it changes only for the cutoffs involving bundles $\{\cI_{2,1},\cI_{1,2},\cI_{2,2}\}$. 
Hence,
\begin{align}
	\frac{\partial\Pr(\cI^*=\cI_{1,1}|\covx')}{\partial \covx^{\cl}} 
	&= \alpha \cOE(\{\cI_{1,1},\cI_{2,1}\};\emptyset)f(\EEcut^{1,1}_{2,1})\frac{\partial \EEcut^{1,1}_{2,1}(\covx')}{\partial \covx^{\cl}}\notag\\
	&+\alpha \cOE(\{\cI_{1,1},\cI_{2,2}\};\cI_{2,1})f(\EEcut^{1,1}_{2,2})\frac{\partial \EEcut^{1,1}_{2,2}(\covx')}{\partial \covx^{\cl}}+R(\covx')\label{eq:derivative_x}\\
	\notag\\
	\frac{\partial\Pr(\cI^*=\cI_{1,1}|\covx'')}{\partial \covx^{\cl}} 
	&= \alpha \cOE(\{\cI_{1,1},\cI_{2,2}\};\cI_{1,2})f(\EEcut^{1,1}_{2,2})\frac{\partial \EEcut^{1,1}_{2,2}(\covx'')}{\partial \covx^{\cl}}\notag\\
	&+\alpha \cOE(\{\cI_{1,1},\cI_{2,1}\};\{\cI_{1,2},\cI_{2,2}\})f(\EEcut^{1,1}_{2,1})\frac{\partial \EEcut^{1,1}_{2,1}(\covx'')}{\partial \covx^{\cl}}+R(\covx'')  \label{eq:derivative_xx}
\end{align}
where $R(\cdot)$ is a collection of terms that are continuous functions of their argument around $\covx$. 
Consequently, in the limit where both $\covx',\covx''$ tend to $\covx$, $R(\covx')$ and $R(\covx'')$ are identical to each other, and Eq.~\eqref{eq:S_Epar} follows by subtracting Eq.~\eqref{eq:derivative_xx} from Eq.~\eqref{eq:derivative_x}.

Next, observe that $\cSE(\covx,\cOE)$ equals a non-zero constant multiplied with $\frac{\partial \EEcut^{1,1}_{2,1}(\covx)}{\partial \covx^{\cl}}$ (or $\frac{\partial \EEcut^{1,1}_{2,2}(\covx)}{\partial \covx^{\cl}}$).
The latter term is a known function of the data and is different from zero.  
Consequently, the density function of the random coefficient for type $t_i=1$ agents evaluated at $\Epar$, $f(\Epar)$, is identified up to a non-zero constant ($\alpha$ multiplied with a non-zero linear combination of consideration probabilities that does not depend on $\Epar$). 
If $[\Epar^*,\Epar^{**}]=[0,\bar\Epar]$, then using that $f(\Epar)$ integrates to one over its support identifies $\alpha \cdot \cSE(\covx,\cOE)$, and consequently the entire function $f(\cdot)$. 
The same argument applies to establish identification of $g(\cdot)$.
\end{proof}

\begin{proof}[Proof of Corollary \ref{cor:T1}]\label{cor:T1proof}  
Once $\alpha f(\Epar) \cdot \cSE(\covx,\cOE)$ and $(1-\alpha) g(\Ypar) \cdot \cSY(\covx,\cOY)$ are identified, so are $f(\Epar)$ and $g(\Ypar)$ provided there is large support. 
Under Assumption \ref{ass:MIC}, $\cSE(\covx,\cOE)$ and $\cSY(\covx,\cOY)$ can be decomposed into a product of two terms, one known and another entirely dependent on consideration, see Eqs.~\eqref{eq:S_MIC1}-\eqref{eq:S_MIC2}. 
Moreover, these terms will be identical, as long as $\cOE(\cdot;\cdot)=\cOY(\cdot;\cdot)$ for all relevant combinations of $\{\cI_{1,1},\cI_{1,2},\cI_{2,1},\cI_{2,2}\}$ in part (i) (respectively, part (ii)) of the assumptions stated in Corollary \ref{cor:T1}. 
Hence, the ratio of $\alpha f(\Epar) \cdot \cSE(\covx,\cOE)$ and $(1-\alpha) g(\Ypar) \cdot \cSY(\covx,\cOY)$ identifies $\alpha$.
\end{proof}

	\setstretch{1}

\bibliography{bcmt,welfare,BMT_BCMT}

\end{document}